\newcolumntype{b}{X}
\newcolumntype{s}{>{\hsize=.68\hsize}X}
\newcolumntype{m}{>{\hsize=.8\hsize}X}
\newtheorem{definition}{Definition}
\newtheorem{theorem}{Theorem}
\newtheorem{corollary}{Corollary}
\newtheorem{lemma}{Lemma}
\newcommand{\infdiv}[3]{#1\,|#2|\,#3}
\newcommand{\floor}[1]{\lfloor #1 \rfloor}
\newcommand{\ceil}[1]{\lceil #1 \rceil}
\title{Select and Permute: An Improved Online Framework for Scheduling to Minimize Weighted Completion Time
\footnote{All authors performed this work at the University of Maryland, College Park, under the support of NSF REU Grant CNS 156019. We would also like to thank An Zhu and Google for their support, and the LILAC program at Bryn Mawr College.}
}
\author[1]{Samir Khuller\thanks{\texttt{samir@cs.umd.edu}}}
\author[2]{Jingling Li\thanks{\texttt{jinglingli1024@gmail.com}}}
\author[3]{Pascal Sturmfels\thanks{\texttt{psturm@umich.edu}}}
\author[4]{Kevin Sun\thanks{\texttt{kevin.sun@rutgers.edu}}}
\author[1]{Prayaag Venkat\thanks{\texttt{pkvasv@gmail.com}}}
\affil[1]{University of Maryland, College Park, MD 20742}
\affil[2]{Bryn Mawr College, Bryn Mawr, PA 19010}
\affil[3]{University of Michigan, Ann Arbor, MI 48109}
\affil[4]{Rutgers University, New Brunswick, NJ 08901}
\begin{document}
\maketitle

\begin{abstract}
In this paper, we introduce a new online scheduling framework for minimizing total weighted completion time in a general setting. The framework is inspired by the work of Hall et al.~\cite{hall1997scheduling} and Garg et al.~\cite{garg2007order}, who show how to convert an offline approximation to an online scheme. Our framework uses two offline approximation algorithms---one for the simpler problem of scheduling without release times, and another for the \emph{minimum unscheduled weight problem}---to create an online algorithm with provably good competitive ratios.

We illustrate multiple applications of this method that yield improved competitive ratios. Our framework gives algorithms with the best or only-known competitive ratios for the concurrent open shop, coflow, and concurrent cluster models. We also introduce a randomized variant of our framework based on the ideas of Chakrabarti et al.~\cite{chakrabarti1996improved} and use it to achieve improved competitive ratios for these same problems.

\end{abstract}

%


\section{Introduction}
\label{sec:intro}
Modern computing frameworks such as MapReduce, Spark, and Dataflow have emerged as essential tools for big data processing and cloud computing. To exploit large-scale parallelism, these frameworks act in several computation stages, which are interspersed with intermediate data transfer stages. During data transfer, results from computations must be efficiently scheduled for transfer across clusters so that the next computation stage can begin.


The coflow model~\cite{chowdhury2012coflow,chowdhury2014efficient} and the concurrent cluster model~\cite{hung2015scheduling,murray2016scheduling} were introduced to capture the distributed processing requirements of jobs across many machines. In these models, the objective of primary theoretical and practical interest is to minimize average job completion time~\cite{ahmadi2016coflows,chowdhury2012coflow,chowdhury2014efficient,khuller2016brief,murray2016scheduling,qiu2015minimizing}. The \emph{concurrent open shop problem}, a special case of the above models, has emerged as a key subroutine for designing better approximation algorithms~\cite{ahmadi2016coflows,khuller2016brief,murray2016scheduling}. 

There has been a lot of work studying offline algorithms for these problems (see \cite{ahmadi2016coflows,khuller2016brief,qiu2015minimizing} for the coflow model,~\cite{murray2016scheduling} for the concurrent cluster model, and~\cite{chen2000supply,garg2007order,mastrolilli2010minimizing,wang2007customer} for the concurrent open shop model), but in real-world applications, jobs often arrive in an online fashion, so studying online algorithms is critical for accurate modeling of data centers.

Hall et al.~\cite{hall1997scheduling} proposed a general framework which converts offline scheduling algorithms to online ones. Inspired by this result, we introduce a new online framework that improves upon the online algorithms of Garg et al.~\cite{garg2007order} for concurrent open shop and also gives the first algorithms with constant competitive ratios for other multiple-machine scheduling settings. 

\subsection{Formal Problem Statement}
In the concurrent open shop setting, the problem is to schedule a set of jobs with machine-dependent components on a set of machines. Let $J = \{1, \ldots, n\}$ denote the set of jobs and $M = \{1, \ldots, m\}$ denote the set of machines. Each job $j$ has one component for each of the $m$ machines. For each job $j$, we denote the processing time of the component on machine $i$ as $p_{ij}$, its release time as $r_j$, and its weight as $w_j$. The different components of each job can be processed concurrently and in any order, as long as no component of job $j$ is processed before $r_j$. Job $j$ is complete when all of its components have been processed; we denote its completion time by $C_j$. Our goal is to specify a schedule of the jobs on the machines that minimizes $\sum_{j \in J} w_j C_j$; see Fig.~\ref{fig:schedules} for an example.

\begin{figure}
\centering
\subfloat[An instance of concurrent open shop.]{\includegraphics[width=0.4\textwidth]{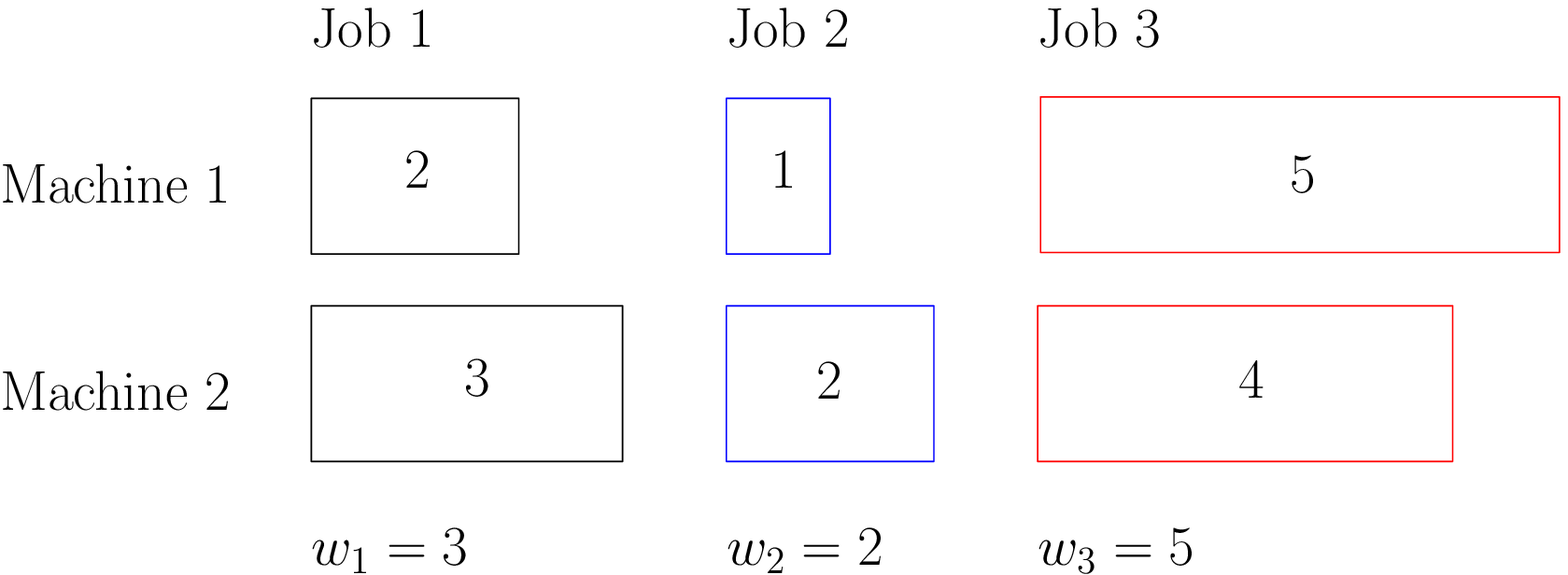}\label{fig:subfig_schedule}}
\hspace{10pt}
\subfloat[The optimal schedule]{\includegraphics[width=0.45\textwidth]{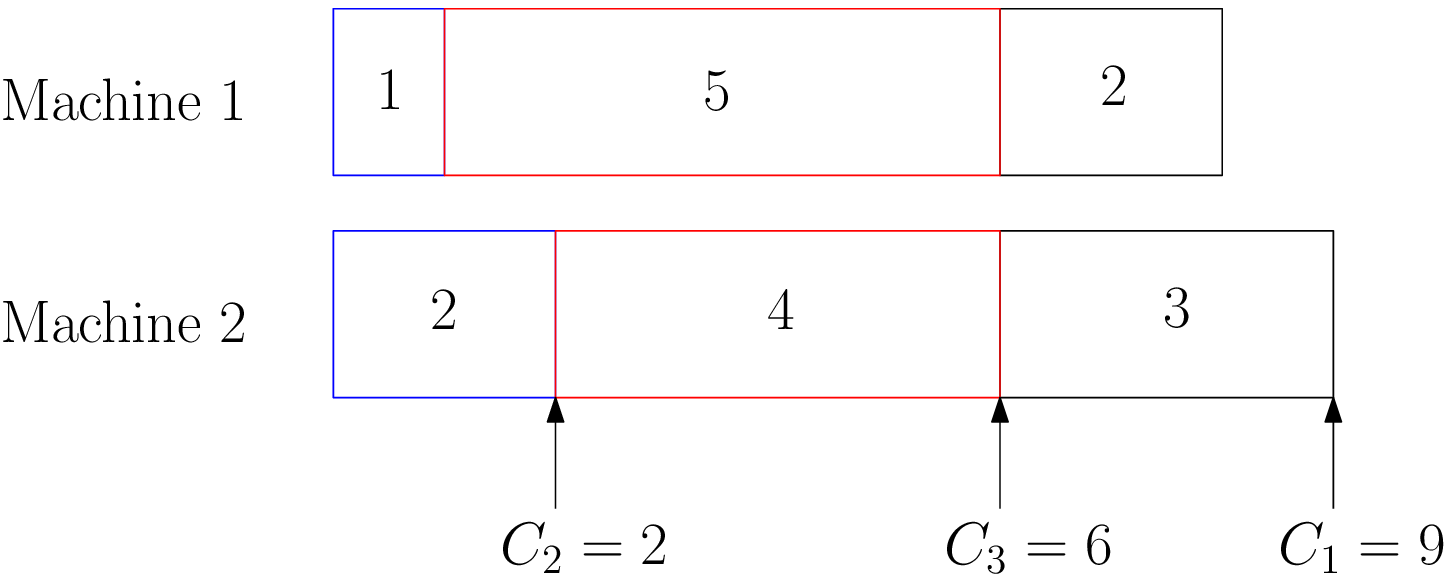}\label{fig:subfig_opt_schedule}}
\caption{All jobs are released at the same time, and the processing requirement for each job-machine combination is specified inside the blocks.}
\label{fig:schedules}
\end{figure}

We follow the 3-field $\infdiv{\alpha}{\beta}{\gamma}$ notation (see \cite{graham1979}) for scheduling problems, where $\alpha$ denotes the scheduling environment, $\beta$ denotes the job characteristics, and $\gamma$ denotes the objective function. As stated above, we focus on the case where $\gamma = \sum_j{w_jC_j}$. In accordance with the notation of~\cite{hall1997scheduling,mastrolilli2010minimizing,murray2016scheduling}, we let $\alpha = PD$ denote the concurrent open shop setting and $\alpha = CC$ denote the concurrent cluster setting, see below for definitions.

\subsection{Related Work}
The concurrent open shop model is a relaxation of the well-known open shop model that allows components of the same job to be processed in parallel on different machines. Roemer~\cite{Roemer2006} showed that $\infdiv{PD}{}{\sum_j w_j C_j}$ is NP-hard and after several successive approximation hardness results~\cite{bansal2010inapproximability,mastrolilli2010minimizing}, Sachdeva and Saket~\cite{sachdeva2013optimal} showed that it is not approximable within a factor less than $2$ unless P = NP, even when job release times are identical. For this model, Wang and Cheng~\cite{wang2007customer} gave a $\frac{16}{3}$-approximation algorithm. This was later improved to a 2-approximation for identical job release times \cite{chen2000supply,garg2007order,leung2007scheduling,mastrolilli2010minimizing}, matching the above lower bound, and a 3-approximation for arbitrary job release times \cite{ahmadi2016coflows,garg2007order,leung2007scheduling}.

In the online setting, Hall et al.~\cite{hall1997scheduling} introduced a general framework that improved the best-known approximation guarantees for several well-studied scheduling environments. They showed that the existence of an offline \emph{dual} $\rho$-approximation yields an online $4\rho$-approximation, where a dual $\rho$-approximation is an algorithm that packs as much weight of jobs into a time interval of length $\rho D$ as the optimal algorithm does into an interval of length $D$. Furthermore, they showed that when $m=1$, a local greedy ordering of jobs yields further improvements.
While the framework of Hall et al.~\cite{hall1997scheduling} is entirely deterministic, Chakrabarti et al.~\cite{chakrabarti1996improved} gave a randomized variant with an improved competitive ratio guarantee. Specifically, they showed a dual-$\rho$ approximation algorithm can be converted to an expected $2.89\rho$-competitive online scheduling algorithm in the same setting, improving upon the $4\rho$ competitive ratio of Hall et al.~\cite{hall1997scheduling}.


The online version of $\infdiv{PD}{}{\sum_j w_j C_j}$ was first studied by Garg et al.~\cite{garg2007order}. They noted that applying the framework of~\cite{hall1997scheduling} was not straightforward, so they focused on minimizing the weight of unscheduled jobs rather than maximizing the weight of scheduled jobs. Using a similar approach to that of Hall et al.~\cite{hall1997scheduling}, they gave an exponential-time 4-competitive algorithm and a polynomial-time 16-competitive algorithm for the online version of $\infdiv{PD}{}{\sum_j w_j C_j}$.

The coflow scheduling model was first introduced as a networking abstraction to model communications in datacenters~\cite{chowdhury2012coflow,chowdhury2014efficient}. In the coflow scheduling problem, the goal is to schedule a set of \emph{coflows} on a non-blocking switch with $m$ input ports and $m$ output ports, where any unused input-output ports can be connected via a path through unused nodes regardless of other existing paths. Each coflow is a collection of parallel flow demands that specify the amount of data that needs to be transferred from an input port to an output port. 

For $\infdiv{\mbox{Coflow}}{r_j=0}{\sum_j{w_jC_j}}$, Qiu et al.~\cite{qiu2015minimizing} gave deterministic $\frac{64}{3}$ and randomized $(8 + \frac{16 \sqrt{2}}{3})$ approximation algorithms. For arbitrary release times, they gave deterministic $\frac{67}{3}$ and randomized $9 + \frac{16 \sqrt{2}}{3}$ approximation algorithms. Khuller and Purohit~\cite{khuller2016brief} later improved these deterministic approximations to 8 and 12 for identical and arbitrary release times respectively, and also gave a randomized $(3 + 2\sqrt{2})$-approximation algorithm for identical release times. Recently, Ahmadi et al.~\cite{ahmadi2016coflows} gave a deterministic 4-approximation and 5-approximation for identical and arbitrary release times, respectively. To the best of our knowledge, there are no known constant-factor competitive algorithms for online coflow scheduling, although Li et al.~\cite{li2016efficient} have given a $O(n \ln m)$-competitive algorithm when all coflow weights are equal to 1.

Finally, we mention the \emph{concurrent cluster} model recently introduced by Murray et al.~\cite{murray2016scheduling}. The concurrent cluster model generalizes the concurrent open shop model by replacing each machine by a cluster of machines, where different machines in the same cluster may have different processing speeds. Each job still has $m$ processing requirements, but this requirement can be fulfilled by any machine in the corresponding cluster. Murray et al.~\cite{murray2016scheduling} give the first constant-factor approximations for minimizing total weighted completion time via a reduction to concurrent open shop and a list-scheduling subroutine.

\subsection{Paper Outline and Results}
In Sect.~\ref{sec:framework}, we introduce a general framework for designing online scheduling algorithms for minimizing total weighted completion time. The framework divides time into intervals of geometrically-increasing size, and greedily ``packs'' jobs into each interval, and then imposes a locally-determined ordering of the jobs within each interval. It is inspired by the framework of Hall et al.~\cite{hall1997scheduling} and an adaptation by Garg et al.~\cite{garg2007order}.

In Sect.~\ref{sec:cos}, we apply our framework to $\infdiv{PD}{}{\sum_j w_j C_j}$. We show that an offline exponential-time algorithm that optimally solves $\infdiv{PD}{r_j = 0}{\sum_j w_j C_j}$ yields an exponential-time 3-competitive algorithm for $\infdiv{PD}{}{\sum_j w_j C_j}$. We also combine the algorithms given by Garg et al.~\cite{garg2007order} and Mastrolilli et al.~\cite{mastrolilli2010minimizing} to create a polynomial-time 10-competitive algorithm for $\infdiv{PD}{}{\sum_j w_j C_j}$. We conclude Sect.~\ref{sec:cos} by giving a polynomial-time $(3+\epsilon)$-competitive algorithm when the number of machines $m$ is fixed.

In Sect.~\ref{sec:coflow}, extending the ideas of Sect.~\ref{sec:cos}, we apply our framework to online coflow scheduling to design an exponential-time 6-competitive algorithm, and a polynomial-time 12-competitive algorithm. 


Section~\ref{sec:rand_framework}, describes an extension of techniques of Chakrabarti et al.~\cite{chakrabarti1996improved} that produces a randomized variant of our framework that yields better competitive ratio guarantees than the deterministic version. Section~\ref{sec:ccs} describes the concurrent cluster model of Murray et al.~\cite{murray2016scheduling}; we show that extending subroutines used for the concurrent open shop setting yields an online 19-competitive algorithm via our framework. Sections~\ref{sec:cos_exp_details} and~\ref{sec:cos_4_poly_fixed_m} describe the subroutines used in Sect.~\ref{sec:cos}.

\begin{center}
\begin{threeparttable}
\captionsetup{width=0.95\textwidth}
\caption{A summary of online approximation guarantees and the best-known previous results, where $m$ denotes the number of machines, $\epsilon$ is arbitrarily small, and ``-'' indicates the absence of a relevant result. The two numbers in each entry of the ``Our ratios'' column denote the competitive and expected ratio of our deterministic and randomized algorithms, respectively.}
\label{table:results}
\begin{tabularx}{0.98\textwidth}{b|b|m|s}
\hline
Problem & Running time & Our ratios & Previous ratio \\
\hline
$\infdiv{PD}{}{\sum_j{w_jC_j}}$ & polynomial & 10, 7.78 & 16~\cite{garg2007order} \\
\hline
$\infdiv{PD}{}{\sum_j{w_jC_j}}$ & exponential & 3, 2.45 & 4~\cite{garg2007order} \\
\hline
$\infdiv{PD}{}{\sum_j{w_jC_j}}$ & polynomial, fixed $m$ & $3+\epsilon$, $2.45+\epsilon$ & - \\
\hline
$\infdiv{\mbox{Coflow}}{}{\sum_j{w_jC_j}}$ & polynomial & 12, 9.78 & - \\
\hline
$\infdiv{\mbox{Coflow}}{}{\sum_j{w_jC_j}}$ & exponential & 6, 5.45 & - \\
\hline
$\infdiv{CC}{}{\sum_j{w_jC_j}}$ & polynomial & 19, 14.55 & - \\
\hline
\end{tabularx}
\end{threeparttable}
\end{center}
\section{A Minimization Framework for Online Scheduling}
\label{sec:framework}
In this section, we introduce our framework for online scheduling problems. To motivate the key ideas of this section, we begin by briefly reviewing the work of Hall et al.~\cite{hall1997scheduling} and Garg et al.~\cite{garg2007order}.

\subsection{The maximization framework of Hall et al.~\cite{hall1997scheduling}}
\label{sec:hall_framework}
The framework of Hall et al.~\cite{hall1997scheduling} divides the online problem into a sequence of offline \emph{maximum scheduled weight} problems, each of which is solved using an offline \emph{dual} approximation algorithm.

\begin{definition}[Maximum scheduled weight problem (MSWP) \cite{hall1997scheduling}] Given a set of jobs, a non-negative weight for each job, and a deadline $D$, construct a schedule that maximizes the total weight of jobs completed by time $D$.
\end{definition}


\begin{definition}[Dual $\rho$-approximation algorithm \cite{hall1997scheduling}]
An algorithm for the MSWP is a \emph{dual $\rho$-approximation algorithm} if it constructs a schedule of length at most $\rho D$ and has total weight at least that of the schedule which maximizes the weight of jobs completed by $D$.
\end{definition}

Fix a scheduling environment and suppose we have a dual $\rho$-approximation for the MSWP. We divide time into intervals of geometrically-increasing size by letting $t_0 = 0$ and $t_k = 2^{k-1}$ for $k = 1, \ldots, L$ where $L$ is large enough to cover the entire time horizon. At each time $t_k$, let $R(t_k)$ denote the set of jobs that have arrived by $t_k$ but have not yet been scheduled. We run the dual $\rho$-approximation algorithm on $R(t_k)$ with deadline $D = t_{k+1} - t_k = t_k$. In the output schedule, we take only jobs which complete by $\rho D$ and schedule them in the interval starting at time $\rho t_k$. Hall et al.~\cite{hall1997scheduling} show that this framework produces an online $4\rho$-competitive algorithm.

\subsection{The minimum unscheduled weight problem of Garg et al.~\cite{garg2007order}}

Garg et al.~\cite{garg2007order} sought to apply the framework of Hall et al.~\cite{hall1997scheduling} to the concurrent open shop setting. They noted that devising a dual-$\rho$ approximation algorithm for concurrent open shop was difficult, so they instead proposed a variant of the MSWP. The definitions below generalize those used by Garg et al.~\cite{garg2007order} to arbitrary scheduling problems.

\begin{definition}[Minimum unscheduled weight problem (MUWP)]
Given a set of jobs, a non-negative weight for each job, and a deadline $D$, find a subset of jobs $S$ which can be completed by time $D$ and minimizes the total weight of jobs not in $S$. We call this quantity the \emph{unscheduled weight}.
\end{definition} 


\begin{definition}[$(\alpha,\beta)$-approximation algorithm]
An algorithm for the MUWP is an \emph{$(\alpha,\beta)$-approximation} if it finds a subset of jobs which can be completed by $\alpha D$ and has unscheduled weight at most $\beta$ times that of the subset of jobs with minimum unscheduled weight that completes by $D$.
\end{definition}

Note that a dual $\rho$-approximation for the MSWP is a $(\rho,1)$-approximation for the MUWP. With these definitions, Garg et al.~\cite{garg2007order} established constant-factor approximations for $\infdiv{PD}{}{\sum_j{w_jC_j}}$.

\subsection{A minimization framework}
\label{sec:min_framework}
We now describe a new framework inspired by the ideas of Hall et al.~\cite{hall1997scheduling} and Garg et al.~\cite{garg2007order}. For the settings we consider, previous online algorithms do not impose any particular ordering of jobs within each interval, which can lead to schedules with poor local performance. In our framework, we combine an $(\alpha,\beta)$-approximation algorithm for MUWP and a $\gamma$-approximation to the offline version of the scheduling problem with identical release times to address this issue. Our framework generalizes and merges some of the ideas of Garg et al.~\cite{garg2007order} and Hall et al.~\cite{hall1997scheduling} so that it can be applied to a broader class of scheduling problems and only requires blackbox access to approximation algorithms for simpler problems.

As in the works of Hall et al.~\cite{hall1997scheduling} and Garg et al.~\cite{garg2007order}, we assume that all processing times are at least 1. This is to avoid the extreme scenario that a single job of size $\epsilon \ll 1$ arrives just after time 0, and our framework waits until time 1 to schedule, thus leading to arbitrarily large competitive ratio.

Let $W$ denote the total weight of all the jobs in $J$, and let $W_{\tau}^{\mathcal{A}}$ ($W_{\tau}^{\mathcal{OPT}}$) denote the total weight of jobs that complete after time $\tau$ by our algorithm $\mathcal{A}$ (by the optimal algorithm $\mathcal{OPT}$). Note that $W_{\tau}^{\mathcal{A}}, W_{\tau}^{\mathcal{OPT}}$ include the weight of jobs not yet released at time $\tau$. Let $\tau_0 = 0$, and for $k \geq 1$, let $\tau_{k} = 2^{k-1}$, $I_k$ denote the $k^\text{th}$ interval $[\tau_k, \tau_{k+1})$, $\alpha I_k$ denote $[\alpha \tau_k, \alpha \tau_{k+1})$, and $R(\tau_k)$ denote the set of jobs released but not yet scheduled before $\tau_k$ by $\mathcal{A}$. 

Our online algorithm $\mathcal{A}$ works as follows. At each $\tau_k$, run an $(\alpha,\beta)$-approximation algorithm on $R(\tau_k)$ with deadline $D = \tau_{k+1} - \tau_k$. Schedule the output set of jobs in $\alpha I_k$ using the offline $\gamma$-approximation algorithm.\footnote{We make the critical assumption that the offline $\gamma$-approximation algorithm does not increase the makespan of the given subset of jobs, so as to ensure that the schedule fits inside of $\alpha I_k$. For the scheduling models studied in this paper, this assumption will indeed hold. In fact, if it can be shown that the $\gamma$-approximation algorithm also approximates the makespan criteria within some factor $\mu$, then it is straightforward to incorporate this into the model, at the expense of an additional $\mu$ factor in the approximation guarantee. For example, Chakrabarti et al.~\cite{chakrabarti1996improved} provide bicriteria approximation algorithms for the total weighted completion time and makespan objective functions.}

\begin{theorem} \label{thm:alphabetagamma}
Algorithm $\mathcal{A}$ is $(2\alpha \beta + \gamma)$-competitive, with an additive $\alpha W$ term.
\end{theorem}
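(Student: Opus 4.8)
The plan is to charge the algorithm's cost to the optimum through the ``remaining weight'' profiles $W_\tau^{\mathcal A}$ and $W_\tau^{\mathcal{OPT}}$. Let $S_k$ be the set of jobs that $\mathcal A$ selects (via the $(\alpha,\beta)$-approximation) and schedules inside $\alpha I_k$, let $w(S_k)$ be its total weight, and write each completion time as $C_j^{\mathcal A} = \alpha\tau_k + C_j^{(k)}$, where $C_j^{(k)}$ is the completion time produced by the offline $\gamma$-approximation run on $S_k$ relative to the interval's start. This decomposition is legitimate because the makespan assumption of the footnote guarantees the $S_{k-1}$-schedule clears the machines by $\alpha\tau_k$, and $\alpha\ge 1$ ensures every job of $S_k\subseteq R(\tau_k)$ is already released by $\alpha\tau_k$. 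Summing gives
\[
\sum_j w_j C_j^{\mathcal A} \;=\; \underbrace{\alpha\sum_k \tau_k\, w(S_k)}_{\text{offset}} \;+\; \underbrace{\sum_k\sum_{j\in S_k} w_j C_j^{(k)}}_{\text{local}} .
\]

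I would dispatch the \emph{local} term first, as it is the easier half and produces the additive $\gamma$. By definition of the offline $\gamma$-approximation, $\sum_{j\in S_k} w_j C_j^{(k)} \le \gamma\cdot\mathrm{OPT}_k$, where $\mathrm{OPT}_k$ is the optimal weighted completion time of $S_k$ with all release times set to $0$. Dropping release-time constraints only lowers the optimum, so $\mathrm{OPT}_k$ is at most the cost $\mathcal{OPT}$ itself incurs on $S_k$; since the $S_k$ partition $J$, summing over $k$ telescopes to $\sum_k\sum_{j\in S_k} w_j C_j^{(k)} \le \gamma\,\mathrm{OPT}$.

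For the \emph{offset} term I would rewrite $\tau_k = \sum_{\ell<k}|I_\ell|$ and swap the order of summation. Using that the total weight selected at steps $\ge m$ equals exactly $W_{\alpha\tau_m}^{\mathcal A}$ (a job finishes after $\alpha\tau_m$ iff it is selected at step $\ge m$, counting also jobs released later), the offset term becomes $\alpha\sum_{\ell\ge 0}|I_\ell|\,W_{\alpha\tau_{\ell+1}}^{\mathcal A}$; its $\ell=0$ term is at most $\alpha|I_0|W=\alpha W$, precisely the claimed additive term. The crux is then the inequality $W_{\alpha\tau_{\ell+1}}^{\mathcal A}\le \beta\,W_{\tau_\ell}^{\mathcal{OPT}}$ for $\ell\ge1$. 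To prove it I would decompose $W_{\alpha\tau_{\ell+1}}^{\mathcal A}$ into the weight $u_\ell$ left unscheduled by the step-$\ell$ selection plus the weight $A_\ell$ of jobs not yet released by $\tau_\ell$, and observe that the set of jobs of $R(\tau_\ell)$ which $\mathcal{OPT}$ finishes by $\tau_\ell$ is a feasible MUWP solution for deadline $D=\tau_{\ell+1}-\tau_\ell=\tau_\ell$ (these jobs are released before $\tau_\ell$ and run inside $[0,\tau_\ell]$ in $\mathcal{OPT}$, hence have makespan $\le D$). The $(\alpha,\beta)$-guarantee then bounds $u_\ell$ by $\beta$ times the weight of $R(\tau_\ell)$-jobs that $\mathcal{OPT}$ does \emph{not} finish by $\tau_\ell$, and $A_\ell$ likewise contributes only to $\mathcal{OPT}$'s mass remaining after $\tau_\ell$, giving the bound.

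I expect this last step to be the main obstacle: one must verify the feasibility reduction from $\mathcal{OPT}$'s partial schedule to an MUWP solution \emph{at exactly the deadline the algorithm uses}, and carefully check that the unreleased mass $A_\ell$ and the unscheduled mass $u_\ell$ are charged to disjoint portions of $W_{\tau_\ell}^{\mathcal{OPT}}$ so that no double counting occurs. Once the key inequality is secured, the remainder is routine: substituting it into the offset term yields $\alpha W + \alpha\beta\sum_{\ell\ge1}\tau_\ell W_{\tau_\ell}^{\mathcal{OPT}}$, and the geometric spacing $\tau_\ell=2\tau_{\ell-1}$ together with monotonicity of $W_\tau^{\mathcal{OPT}}$ gives $\tau_\ell W_{\tau_\ell}^{\mathcal{OPT}}\le 2\int_{\tau_{\ell-1}}^{\tau_\ell} W_\tau^{\mathcal{OPT}}\,d\tau$, whence $\sum_{\ell\ge1}\tau_\ell W_{\tau_\ell}^{\mathcal{OPT}}\le 2\,\mathrm{OPT}$; this is where the factor $2$ in $2\alpha\beta$ originates. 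Combining the offset bound $\alpha W+2\alpha\beta\,\mathrm{OPT}$ with the local bound $\gamma\,\mathrm{OPT}$ yields the theorem.
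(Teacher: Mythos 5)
Your proposal is correct and follows essentially the same route as the paper: the same decomposition of $C_j^{\mathcal A}$ into the interval offset $\alpha\tau_{t(j)}$ plus the within-interval delay $\delta_j$, the same key inequality $W^{\mathcal A}_{\alpha\tau_{k+1}}\le\beta W^{\mathcal{OPT}}_{\tau_k}$ proved by feeding $\mathcal{OPT}$'s partial schedule to the MUWP, and the same bound on the local term via the $\gamma$-approximation over the partition $\{S_k\}$. The only differences are presentational (explicit summation swap and the Riemann-sum view of where the factor $2$ arises).
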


To prove Theorem~\ref{thm:alphabetagamma}, we first show that at each time step, $\mathcal{A}$ remains competitive with the optimal schedule by incurring a time delay.

\begin{lemma} \label{lem:alphabeta}
For any $k \geq 0$, we have $W^A_{\alpha \tau_{k+1}} \leq \beta W^{\mathcal{OPT}}_{\tau_k}$\enspace.
\end{lemma}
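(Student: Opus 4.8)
The plan is to fix $k$ and prove the bound directly, channeling the comparison between $\mathcal{A}$ and $\mathcal{OPT}$ through the single round-$k$ call to the $(\alpha,\beta)$-approximation; throughout, I write $W(X)$ for the total weight of a set $X$ of jobs and let $S_j \subseteq R(\tau_j)$ be the subset the approximation selects at time $\tau_j$.

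First I would pin down the left-hand side combinatorially. By the feasibility half of the $(\alpha,\beta)$-guarantee together with the makespan assumption of the footnote, every job of $S_j$ is completed within $\alpha I_j$, hence by $\alpha\tau_{j+1}$; so for $j \le k$ the jobs of $S_j$ finish by $\alpha\tau_{k+1}$. Consequently the jobs that $\mathcal{A}$ completes after $\alpha\tau_{k+1}$ are exactly those never selected in rounds $0,\dots,k$, i.e.\ $J \setminus (S_0 \cup \cdots \cup S_k)$. Using $R(\tau_k)\setminus S_k = \{j : r_j \le \tau_k\}\setminus(S_0\cup\cdots\cup S_k)$ and the fact that no job with $r_j > \tau_k$ can have been selected by round $k$, this splits as $W^{A}_{\alpha\tau_{k+1}} = W(R(\tau_k)\setminus S_k) + W(\{j : r_j > \tau_k\})$.

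Next I would set up the $\mathcal{OPT}$ side. Letting $O$ be the set of jobs $\mathcal{OPT}$ completes by $\tau_k$, we have $W^{\mathcal{OPT}}_{\tau_k} = W(\{j : r_j \le \tau_k\}\setminus O) + W(\{j : r_j > \tau_k\})$, since $O \subseteq \{j : r_j \le \tau_k\}$. Because $\beta \ge 1$, the unreleased terms already satisfy $W(\{j:r_j>\tau_k\}) \le \beta\,W(\{j:r_j>\tau_k\})$, so it suffices to prove the single inequality $W(R(\tau_k)\setminus S_k) \le \beta\, W(\{j : r_j \le \tau_k\}\setminus O)$ and add the two.

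For this key inequality I would exhibit a feasible solution to the round-$k$ MUWP instance, namely $S = O \cap R(\tau_k)$, and invoke the approximation. Since $\mathcal{OPT}$ finishes every job of $O$ by $\tau_k$, it processes each such job entirely inside $[0,\tau_k]$; copying that schedule restricted to $S$ into the release-time-free subproblem gives a schedule of makespan at most $\tau_k$. The geometric spacing now does the work: the subproblem deadline is $D = \tau_{k+1}-\tau_k \ge \tau_k$ (with equality for $k\ge 1$, while $O=\emptyset$ when $k=0$ since processing times are at least $1$), so $S$ completes by $D$ and is feasible, with unscheduled weight $W(R(\tau_k)\setminus O) \le W(\{j : r_j \le \tau_k\}\setminus O)$. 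Hence the optimal MUWP unscheduled weight is at most $W(\{j : r_j \le \tau_k\}\setminus O)$, and the approximation half of the guarantee yields the key inequality; combining with the two decompositions above proves the lemma. I expect the main obstacle to be the release-time bookkeeping: verifying that the restriction of $\mathcal{OPT}$'s schedule to $O \cap R(\tau_k)$ is a legal schedule of length at most $D$ in the release-time-free MUWP subproblem, and confirming that the not-yet-released jobs contribute identically to both sides. Everything ultimately rests on the identity $\tau_{k+1}-\tau_k = \tau_k$ for $k\ge 1$, which aligns the elapsed time with the subproblem deadline.
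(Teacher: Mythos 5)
Your proof is correct and follows essentially the same route as the paper's: both exhibit the set of jobs $\mathcal{OPT}$ completes by $\tau_k$ (restricted to $R(\tau_k)$) as a feasible solution to the round-$k$ MUWP with deadline $D=\tau_{k+1}-\tau_k=\tau_k$, and then invoke the $(\alpha,\beta)$-approximation guarantee. Your write-up is in fact more careful than the paper's three-line argument---in particular, you make explicit the accounting for not-yet-released jobs via $\beta\ge 1$ and the restriction of $\mathcal{OPT}$'s schedule to a release-time-free subproblem, steps the paper silently absorbs into the assertion that the feasible solution's ``total unscheduled weight is $W^{\mathcal{OPT}}_{\tau_k}$.''
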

\begin{proof}
Every job completed by $\mathcal{OPT}$ by $\tau_k$ must have been released before $\tau_k$. For each such job $j$, either our algorithm completed it before time $\tau_k$ or $j \in R(\tau_k)$. The set of jobs completed by $\mathcal{OPT}$ by time $\tau_k$ gives a feasible solution to the MUWP with deadline $D = \tau_{k+1} - \tau_k = \tau_k $ and its total unscheduled weight is $W^{\mathcal{OPT}}_{\tau_k}$. Therefore, the optimal total unscheduled weight value for the MUWP when considering all $j \in R(\tau_k)$ with deadline $D$ is at most $W^{\mathcal{OPT}}_{\tau_k}$. By the definition of $(\alpha, \beta)$-approximation, the claim follows.
\end{proof} 

The next lemma states that ordering jobs within each interval further approximates the optimal schedule closely. For a fixed subset $S$ of jobs, let $\mathcal{OPT}(S)$ denote the optimal schedule for $S$ and $C^{\mathcal{OPT}(S)}_j$ denote the completion time of job $j$ in $\mathcal{OPT}(S)$. Also, let $\mathcal{OPT}_0(S)$ denote an optimal schedule that starts at time 0 and ignores all job release times, and let $C^{\mathcal{OPT}_0(S)}_j$ denote the completion time of job $j$ in $\mathcal{OPT}_0(S)$. 

\begin{lemma} \label{lem:local_delta}
The weighted completion time for schedule $\mathcal{OPT}_0(S)$ is at most that of schedule $\mathcal{OPT}(S)$; i.e., 
\begin{equation*}
\sum_{j \in S}w_{j} C_{j}^{\mathcal{OPT}_0(S)}\leq \sum_{j \in S}w_{j}C^{\mathcal{OPT}(S)}_j~~\enspace.
\end{equation*}
\end{lemma}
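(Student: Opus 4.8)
The plan is to prove the inequality by a standard relaxation argument: $\mathcal{OPT}_0(S)$ is the optimal value of a problem whose feasible set contains that of the problem optimized by $\mathcal{OPT}(S)$, so its objective value can only be smaller. Note that this reasoning is model-independent — it applies verbatim to concurrent open shop, coflow, and concurrent cluster — because in each case ``ignoring release times'' is literally a relaxation of the original constraints.

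First I would make precise what distinguishes the two schedules. Both process exactly the same set $S$ on the same machines subject to the same capacity constraints; the only difference is that $\mathcal{OPT}(S)$ must additionally obey the requirement that no component of job $j$ begins before its release time $r_j$, whereas $\mathcal{OPT}_0(S)$ faces no such restriction (equivalently, it solves the instance with every $r_j$ reset to $0$). Since every $r_j \ge 0$, the constraint enforced by $\mathcal{OPT}(S)$ is strictly stronger than the vacuous condition ``no component begins before time $0$'' that any physical schedule already satisfies. Hence every schedule that is feasible for the release-time-constrained problem is also feasible for the relaxed problem; that is, the feasible region of the relaxed problem contains that of the constrained one.

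The key step is then immediate. Since $\mathcal{OPT}(S)$ is itself a feasible schedule for the relaxed problem, and $\mathcal{OPT}_0(S)$ minimizes weighted completion time over all feasible schedules of that relaxed problem, the value of $\mathcal{OPT}_0(S)$ is at most the value of any particular feasible schedule, and in particular at most that of $\mathcal{OPT}(S)$, giving
\[
\sum_{j \in S} w_j C_j^{\mathcal{OPT}_0(S)} \;\le\; \sum_{j \in S} w_j C_j^{\mathcal{OPT}(S)}.
\]
If one prefers an explicit witness rather than an appeal to optimality over a larger set, I would instead left-shift the entire schedule $\mathcal{OPT}(S)$ so that it begins at time $0$: this never increases any completion time $C_j$, and it yields a release-time-ignoring schedule, so its weighted completion time both upper-bounds $\mathcal{OPT}_0(S)$ and is at most that of $\mathcal{OPT}(S)$, chaining to the same conclusion. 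There is no genuine obstacle here; the only point requiring care is the direction of the feasibility inclusion — relaxing a constraint enlarges the feasible set and therefore can only lower the optimum — together with the monotonicity of the weighted-completion-time objective, which is what makes the left-shift harmless.
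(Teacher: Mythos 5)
Your argument is correct and matches the paper's one-line proof exactly: $\mathcal{OPT}(S)$ is itself a feasible schedule for the release-time-free problem, so the optimum of that relaxation can only be smaller. The extra left-shift witness is fine but unnecessary; the feasibility-inclusion observation alone suffices, as the paper notes.
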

\begin{proof}
The optimal schedule of $S$ with release times defines a valid schedule for $S$ without release times, so the claim follows.
\end{proof}

\begin{figure}
\centering
\captionsetup{width=0.8\textwidth}
\includegraphics[scale=0.75]{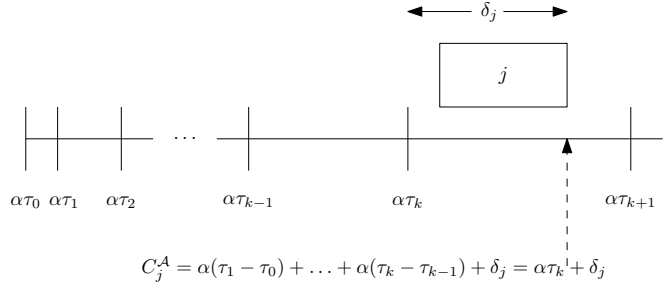}
\caption{We let $\delta_j$ denote the distance between $C^{\mathcal{A}}_j$ and the beginning of the interval in which job $j$ completes.}
\label{fig:delta} 
\end{figure}

Recall that at each $\tau_k$, $\mathcal{A}$ uses an $(\alpha, \beta)$-approximation on the MUWP to select a subset $S_k$ of $R(\tau_k)$ to schedule within $\alpha I_k$ using a $\gamma$-approximation that ignores release times. Let $C_j^\mathcal{A}$ denote the completion time of job $j$ in the schedule produced by $\mathcal{A}$, $t(j)$ denote the largest index such that job $j$ begins processing after time $\tau_{t(j)}$, and $\delta_j = C_j^{\mathcal{A}} - \alpha \tau_{t(j)}$ for each job $j \in J$ (see Fig.~\ref{fig:delta}). Let $L$ be the smallest time index such that the optimal schedule finishes by time $\tau_L$, and let $S_k$ denote the set of jobs scheduled independently by $\mathcal{A}$ in the interval $\alpha I_k$. Then Lemma~\ref{lem:local_delta} implies
\begin{equation} \label{eq:wjdj_bound}
\sum_{j \in S_k}w_{j} \delta_{j} \leq \gamma \sum_{j \in S_k}w_{j}C^{\mathcal{OPT}_0(S_k)}_j \leq \gamma \sum_{j \in S_k}w_{j}C_{j}^{\mathcal{OPT}(S_k)}\enspace.
\end{equation}

\begin{lemma} \label{lem:global_delta}
The weighted sum of the $\delta_j$ is at most $\gamma$ times the optimal weighted completion time; i.e.,
\[
\sum_{j \in J}w_j\delta_j \leq \gamma\sum_{j \in J}w_j C_j^{\mathcal{OPT}}\enspace.
\]
\end{lemma}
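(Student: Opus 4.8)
The plan is to lift the per-interval inequality~\eqref{eq:wjdj_bound} to a global statement by summing over all the intervals $I_k$. The crucial structural fact is that $\mathcal{A}$ schedules each job exactly once, so the sets $S_0, S_1, \ldots, S_{L-1}$ (the jobs scheduled by $\mathcal{A}$ in the respective intervals $\alpha I_k$) form a partition of $J$. Hence $\sum_{j \in J} w_j \delta_j = \sum_k \sum_{j \in S_k} w_j \delta_j$, and applying~\eqref{eq:wjdj_bound} in each interval gives
\[
\sum_{j \in J} w_j \delta_j \;=\; \sum_k \sum_{j \in S_k} w_j \delta_j \;\leq\; \gamma \sum_k \sum_{j \in S_k} w_j C_j^{\mathcal{OPT}(S_k)}\enspace.
\]
It therefore suffices to show that the right-hand side is at most $\gamma \sum_{j \in J} w_j C_j^{\mathcal{OPT}}$, i.e.\ that $\sum_k \sum_{j \in S_k} w_j C_j^{\mathcal{OPT}(S_k)} \leq \sum_{j \in J} w_j C_j^{\mathcal{OPT}}$.

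The key step is to bound the cost of each local optimum $\mathcal{OPT}(S_k)$ by the contribution of the global optimum $\mathcal{OPT}$ on those same jobs. The plan is to observe that the global schedule $\mathcal{OPT}$, once we delete every job outside $S_k$ and keep the remaining jobs exactly where $\mathcal{OPT}$ placed them, is a feasible schedule for the job set $S_k$ that respects release times; in this restricted schedule each $j \in S_k$ completes at time $C_j^{\mathcal{OPT}}$ (deleting jobs never forces a completion time to increase). Since $\mathcal{OPT}(S_k)$ is by definition an optimal schedule for $S_k$, its weighted completion time is no larger than that of this feasible restriction, giving $\sum_{j \in S_k} w_j C_j^{\mathcal{OPT}(S_k)} \leq \sum_{j \in S_k} w_j C_j^{\mathcal{OPT}}$. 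Summing this over $k$ and again using that the $S_k$ partition $J$ yields the desired inequality, and combining with the display above completes the proof.

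The main obstacle—and really the only nontrivial point—is justifying the restriction argument cleanly: one must argue that the optimality of $\mathcal{OPT}(S_k)$ can legitimately be compared against a \emph{sub}schedule carved out of the global $\mathcal{OPT}$. This requires noting that restricting $\mathcal{OPT}$ to $S_k$ yields a genuinely feasible schedule for $S_k$ (release-time constraints are inherited, and the machine-capacity constraints can only become slacker when jobs are removed), so the completion times $C_j^{\mathcal{OPT}}$ for $j \in S_k$ certify an upper bound on the optimal value $\sum_{j \in S_k} w_j C_j^{\mathcal{OPT}(S_k)}$. Everything else is bookkeeping: the partition property of the $S_k$ and the linearity that lets the two interval-indexed sums collapse back into single sums over $J$.
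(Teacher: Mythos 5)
Your proof is correct and follows the same route as the paper: decompose $J$ into the partition $\{S_k\}$, apply the per-interval bound~\eqref{eq:wjdj_bound}, and then compare each local optimum $\mathcal{OPT}(S_k)$ against the restriction of the global $\mathcal{OPT}$ to $S_k$. The paper simply asserts the final inequality $\sum_k \sum_{j \in S_k} w_j C_j^{\mathcal{OPT}(S_k)} \leq \sum_{j \in J} w_j C_j^{\mathcal{OPT}}$ without comment, whereas you correctly supply the justification (feasibility of the restricted schedule certifies the bound), so your write-up is the same argument with one implicit step made explicit.
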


\begin{proof}
Recall that $S_1, \ldots, S_L$ partition $J$, and notice that due to \eqref{eq:wjdj_bound}, we have
\begin{equation} \label{eq:dj_to_opt}
\sum_{k=1}^L\sum_{j\in S_k}w_j\delta_j \leq \gamma\sum_{k=1}^L\sum_{j \in S_k} w_j C_j^{\mathcal{OPT}(S_k)} \leq \gamma\sum_{j \in J} w_j C_j^{\mathcal{OPT}}\enspace,
\end{equation}
thus proving the lemma.
\end{proof}

\begin{proof}[Proof of Theorem~\ref{thm:alphabetagamma}]
We rewrite the total weighted completion time of the schedule produced by $\mathcal{A}$ to obtain the following.
\begin{align*}
\sum_{j \in J}w_j C_j^\mathcal{A} &= \alpha \sum_{k=1}^L(\tau_k - \tau_{k-1})W_{\alpha \tau_k}^\mathcal{A} + \sum_{j \in J}w_j\delta_j \\
&= \alpha \sum_{k=2}^L(\tau_k - \tau_{k-1})W_{\alpha \tau_k}^\mathcal{A} + \alpha W_{\alpha \tau_1}^\mathcal{A} + \sum_{j \in J}w_j \delta_j \\
&\leq 2\alpha \sum_{k=1}^{L}(\tau_{k} - \tau_{k-1})W_{\alpha \tau_{k+1}}^\mathcal{A} + \alpha W + \sum_{j\in J} w_j\delta_j \\
&\leq 2\alpha \beta \sum_{k=1}^{L}(\tau_{k} - \tau_{k-1})W_{ \tau_{k}}^\mathcal{OPT} + \alpha W + \sum_{j\in J}w_j\delta_j \\
&\leq (2\alpha\beta + \gamma)\sum_{j\in J}w_j C_j^\mathcal{OPT} + \alpha W\enspace,
\end{align*}
where the last two inequalities follow from Lemmas~\ref{lem:alphabeta} and~\ref{lem:global_delta}, respectively.
\end{proof}

\section{Applications to Concurrent Open Shop}
\label{sec:cos}
Now we apply our minimization framework to $\infdiv{PD}{}{\sum_j{w_jC_j}}$. In Section~\ref{sec:cos_exp_details}, we give an offline dynamic program that optimally solves $\infdiv{PD}{r_j=0}{\sum_j w_j C_j}$ in exponential time, giving $\gamma = 1$ in our framework.

For the MUWP, in exponential time, we can iterate over every subset of jobs to find a feasible schedule that minimizes the total weight of unscheduled jobs, so this is a $(1,1)$-approximation, giving $\alpha = \beta = 1$. Thus, Theorem~\ref{thm:alphabetagamma} yields the following, which improves upon the competitive ratio of 4 from Garg et al.~\cite{garg2007order}. 

\begin{corollary} \label{cor:cos_exp_3}
There exists an exponential time 3-competitive algorithm for the concurrent open shop setting.
\end{corollary}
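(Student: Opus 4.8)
The plan is to invoke Theorem~\ref{thm:alphabetagamma} directly with the right choice of the three parameters $\alpha$, $\beta$, and $\gamma$, so the entire argument reduces to exhibiting appropriate offline subroutines. First I would set $\gamma = 1$ by appealing to the exponential-time dynamic program for $\infdiv{PD}{r_j=0}{\sum_j w_j C_j}$ promised in Section~\ref{sec:cos_exp_details}: since this program solves the release-time-free problem \emph{optimally}, the within-interval ordering step of $\mathcal{A}$ achieves a $1$-approximation, giving $\gamma = 1$ in the framework.

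Next I would establish $\alpha = \beta = 1$ for the MUWP subroutine. The key observation is that in exponential time we may brute-force the MUWP: enumerate every subset $S$ of the released-but-unscheduled jobs $R(\tau_k)$, test feasibility of completing $S$ by the deadline $D$, and among the feasible subsets return one minimizing the unscheduled weight $\sum_{j \notin S} w_j$. This finds the true optimum of the MUWP with no relaxation of the deadline and no multiplicative loss in unscheduled weight, so it is a $(1,1)$-approximation, i.e. $\alpha = \beta = 1$. One should note that the framework's footnote assumption — that the $\gamma$-approximation does not increase the makespan — holds trivially here, since the optimal release-time-free schedule of a feasible set $S$ has makespan no larger than any feasible completion-by-$D$ schedule, so the selected jobs still fit inside $\alpha I_k = I_k$.

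With $\alpha = \beta = \gamma = 1$ in hand, Theorem~\ref{thm:alphabetagamma} immediately yields that $\mathcal{A}$ is $(2\alpha\beta + \gamma) = (2 \cdot 1 \cdot 1 + 1) = 3$-competitive, with an additive $\alpha W = W$ term. The only remaining point is to argue this additive $W$ term is harmless: since all processing times are at least $1$ (the standing assumption of Section~\ref{sec:min_framework}), every job's completion time satisfies $C_j^{\mathcal{OPT}} \ge 1$, whence $W = \sum_j w_j \le \sum_j w_j C_j^{\mathcal{OPT}}$, and the additive term can be absorbed into the multiplicative guarantee (or handled by the usual scaling argument). This gives the claimed exponential-time $3$-competitive algorithm.

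I do not expect a genuine obstacle, as the statement is essentially a corollary obtained by plugging constants into an already-proved theorem; the only substantive work lies in the two subroutines, and both are deferred to or justified by their exhaustive-search nature. The subtlest step to get right is verifying the makespan assumption of the footnote so that the $\gamma$-approximation's output legitimately fits within $\alpha I_k$, but for concurrent open shop this is immediate since the optimal makespan of $S$ ignoring release times lower-bounds any schedule respecting the deadline.
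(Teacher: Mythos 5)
Your proposal matches the paper's argument exactly: the paper likewise obtains $\gamma=1$ from the exponential-time dynamic program of Section~\ref{sec:cos_exp_details}, obtains $\alpha=\beta=1$ by exhaustively searching over subsets for the MUWP, and plugs these into Theorem~\ref{thm:alphabetagamma} to get $2\alpha\beta+\gamma=3$. One small caveat: the paper retains the additive $\alpha W = W$ term rather than absorbing it (absorbing $W \le \sum_j w_j C_j^{\mathcal{OPT}}$ into the ratio would give $4$, not $3$), so the corollary should be read as $3$-competitive \emph{with} the additive $W$ term, which the paper only observes is smaller than the $3W$ additive term of Garg et al.
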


In polynomial time, Garg et al.~\cite{garg2007order} provide a $(2,2)$-approximation for the MUWP, and Mastrolilli et al.~\cite{mastrolilli2010minimizing} provide a 2-approximation for offline version of $\infdiv{PD}{r_j=0}{\sum_j w_j C_j}$. These results with Theorem~\ref{thm:alphabetagamma} improve the ratio of 16 by Garg et al.~\cite{garg2007order}. We note that the additional additive term of $\alpha W$ in Theorem~\ref{thm:alphabetagamma} is smaller than the additive $3W$ term in the guarantees of Garg et al.~\cite{garg2007order}, for both the exponential-time and polynomial-time cases.

\begin{corollary}
\label{cor:cos_poly_10}
There exists a polynomial-time 10-competitive algorithm for the concurrent open shop setting.
\end{corollary}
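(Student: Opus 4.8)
The plan is to instantiate the three parameters of the framework of Theorem~\ref{thm:alphabetagamma} with the two subroutines cited immediately before the statement, and then read off the resulting competitive ratio. Concretely, I would set $\alpha = \beta = 2$ by using the polynomial-time $(2,2)$-approximation for the MUWP due to Garg et al.~\cite{garg2007order}, and set $\gamma = 2$ by using the polynomial-time $2$-approximation of Mastrolilli et al.~\cite{mastrolilli2010minimizing} for the offline problem $\infdiv{PD}{r_j=0}{\sum_j w_j C_j}$ with identical release times. Substituting these into the bound $2\alpha\beta + \gamma$ from Theorem~\ref{thm:alphabetagamma} gives $2(2)(2) + 2 = 10$, with the additive term $\alpha W = 2W$.

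Before invoking the theorem I would verify that both subroutines meet the framework's hypotheses. First, each is polynomial-time, and since the time horizon is covered by $L = O(\log T)$ geometrically increasing intervals (and all processing times are at least $1$), the overall algorithm $\mathcal{A}$ invokes each subroutine $O(\log T)$ times and therefore runs in polynomial time. Second, I would confirm that the Garg et al. routine is genuinely an $(\alpha,\beta)$-approximation in the sense of the definition: on a deadline $D$ it selects a job subset completing by $\alpha D = 2D$ whose unscheduled weight is at most $\beta = 2$ times the minimum achievable by deadline $D$. Third, and most importantly, I would check the footnote caveat---that applying the Mastrolilli et al.\ $2$-approximation to the selected subset $S_k$ does not increase its makespan beyond what fits inside $\alpha I_k$---so that the reordered schedule still completes within the interval $[\alpha\tau_k, \alpha\tau_{k+1})$.

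The routine arithmetic is trivial; the only real step requiring care is this last makespan-preservation check, since the framework's correctness (via Lemma~\ref{lem:alphabeta} and the packing of each $S_k$ into $\alpha I_k$) relies on it. For concurrent open shop this holds because the makespan of a job subset is determined by the per-machine load, which is unaffected by reordering jobs within the interval; the $2$-approximation merely permutes the jobs of $S_k$ to optimize weighted completion time and hence respects the same makespan. With this verified, Theorem~\ref{thm:alphabetagamma} applies directly and yields the claimed polynomial-time $10$-competitive algorithm.
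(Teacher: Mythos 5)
Your proposal matches the paper's argument exactly: instantiate Theorem~\ref{thm:alphabetagamma} with the $(2,2)$-approximation of Garg et al.\ for the MUWP ($\alpha=\beta=2$) and the $2$-approximation of Mastrolilli et al.\ for the offline problem with identical release times ($\gamma=2$), giving $2\alpha\beta+\gamma=10$ with additive term $2W$. Your added check that the permutation subroutine preserves per-machine load (and hence makespan) is precisely the ``critical assumption'' the paper relegates to a footnote, so the proposal is correct and takes essentially the same route.
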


When the number of machines $m$ is constant, there exists a polynomial-time $(1+\epsilon,1)$-approximation algorithm for the MUWP (see Section~\ref{sec:cos_4_poly_fixed_m}) by a reduction to the multidimesional knapsack problem. Furthermore, when $m$ is fixed, Cheng et al.~\cite{edwin2011polynomial} gave a PTAS for the offline $\infdiv{PD}{r_j=0}{\sum_j w_j C_j}$. Combining these results with Theorem~\ref{thm:alphabetagamma} yields the following.

\begin{corollary}
There exists a polynomial time, $(3+\epsilon)$-competitive algorithm for $\infdiv{PD}{}{\sum{w_jC_j}}$ when the number of machines is fixed.
\end{corollary}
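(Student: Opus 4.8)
The plan is to instantiate the general framework of Theorem~\ref{thm:alphabetagamma} with two subroutines that become available in polynomial time precisely because $m$ is fixed, and then to collapse the resulting lower-order error terms into a single $\epsilon$. First I would fix an auxiliary parameter $\epsilon' > 0$ (to be related to $\epsilon$ at the very end) and invoke the polynomial-time $(1+\epsilon', 1)$-approximation algorithm for the MUWP obtained via the reduction to $m$-dimensional knapsack described in Section~\ref{sec:cos_4_poly_fixed_m}; this supplies $\alpha = 1 + \epsilon'$ and $\beta = 1$. Second, I would use the PTAS of Cheng et al.~\cite{edwin2011polynomial} for the offline problem $\infdiv{PD}{r_j=0}{\sum_j w_j C_j}$ as the offline $\gamma$-approximation subroutine, giving $\gamma = 1 + \epsilon'$. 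Since both subroutines run in polynomial time when $m$ is constant, Algorithm $\mathcal{A}$ is itself a polynomial-time online algorithm.

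Next I would apply Theorem~\ref{thm:alphabetagamma} as a black box. Substituting the parameters yields a competitive ratio of
\[
2\alpha\beta + \gamma = 2(1+\epsilon')(1) + (1+\epsilon') = 3 + 3\epsilon',
\]
together with the additive term $\alpha W = (1+\epsilon')W$. Setting $\epsilon' = \epsilon/3$ then gives the claimed $(3+\epsilon)$-competitive guarantee. This epsilon bookkeeping is routine, but I would state it explicitly so that the two independent error sources---the resource augmentation in the knapsack-based MUWP approximation and the $(1+\epsilon')$ slack in the PTAS---are combined once rather than compounded; using the same $\epsilon'$ for both (or, equivalently, $\epsilon/3$ for each) keeps the final constant exactly $3+\epsilon$.

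The one point that genuinely requires verification is the makespan assumption flagged in the footnote to Theorem~\ref{thm:alphabetagamma}: the PTAS schedule of the chosen subset $S_k$ must fit inside the interval $\alpha I_k$, whose length is exactly $\alpha D = (1+\epsilon')(\tau_{k+1}-\tau_k)$. I would discharge this by observing that for concurrent open shop without release times the makespan of \emph{any} valid schedule of $S_k$ equals $\max_i \sum_{j \in S_k} p_{ij}$, an invariant independent of the ordering the PTAS chooses; since the MUWP subroutine already certifies that $S_k$ can be completed by $\alpha D$, this common makespan is at most $\alpha D$, and hence the PTAS schedule fits within $\alpha I_k$ no matter how it permutes the jobs. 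I expect the main (and essentially only) obstacle to be this correct, non-compounding introduction and tracking of the two $(1+\epsilon')$ factors; once the fixed-makespan structure of concurrent open shop is noted, the makespan check is immediate and everything else is a direct substitution into Theorem~\ref{thm:alphabetagamma}.
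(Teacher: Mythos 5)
Your proposal is correct and matches the paper's proof exactly: instantiate Theorem~\ref{thm:alphabetagamma} with the $(1+\epsilon,1)$-approximation for the MUWP from the multidimensional-knapsack reduction of Section~\ref{sec:cos_4_poly_fixed_m} and the PTAS of Cheng et al.~\cite{edwin2011polynomial} as the $\gamma$-approximation, yielding $2(1+\epsilon')+(1+\epsilon') = 3+3\epsilon'$ and rescaling $\epsilon'$. Your explicit check that the no-idle-time makespan $\max_i \sum_{j \in S_k} p_{ij} \leq \alpha D$ guarantees the schedule fits in $\alpha I_k$ is a detail the paper leaves implicit, but it is the same argument.
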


\section{Applications to Coflow Scheduling}
\label{sec:coflow}
We now apply our framework to coflow scheduling, introduced by Chowdhury and Stoica~\cite{chowdhury2012coflow}. We are given a non-blocking network with $m$ input ports and $m$ output ports. A \emph{coflow} is a collection of parallel flows processed by the network. We represent a coflow $j$ by an $m \times m$ matrix $D^j = (d_{io}^j)_{i,o \in [m]}$, where $d_{io}^j$ denotes the integer amount of data to be transferred from input port $i$ to output port $o$ for coflow $j$. Each port can process at most one unit of data per time unit, and we assume that the transfer of data within the network is instantaneous.

The problem is to schedule a set of $n$ coflows, each with a non-negative weight $w_j$ and release time $r_j$, that minimizes the sum of weighted completion times, where the completion time of a coflow is the earliest time at which all of its flows have been processed. We denote this problem by $\infdiv{\mbox{Coflow}}{}{\sum_j{w_jC_j}}$.


As in Sect.~\ref{sec:cos}, in exponential time, we can iterate over all subsets of coflows to optimally solve the MUWP, giving a $(1,1)$-approximation. Moreover, Ahmadi et al.~\cite{ahmadi2016coflows} proposed a 4-approximation for offline coflow scheduling\footnote{Since permutation schedules are not necessarily optimal for coflow scheduling~\cite{chowdhury2014efficient}, even finding a \emph{factorial}-time optimal algorithm is nontrivial. For simplicity, we have chosen to use a polynomial-time algorithm to achieve Corollary~\ref{cor:exp-coflow}.}.

\begin{corollary}\label{cor:exp-coflow}
There exists an exponential-time $6$-competitive algorithm  for online coflow scheduling.
\end{corollary}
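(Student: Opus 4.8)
The plan is to instantiate the general framework of Theorem~\ref{thm:alphabetagamma} for coflow scheduling by identifying the three parameters $\alpha$, $\beta$, and $\gamma$ and then checking that the framework's hypotheses hold in this setting. The offline ingredient $\gamma$ is immediate: Ahmadi et al.~\cite{ahmadi2016coflows} give a polynomial-time $4$-approximation for offline coflow scheduling with identical release times, so $\gamma = 4$. What remains is to supply an $(\alpha,\beta)$-approximation for the MUWP and to verify the makespan hypothesis recorded in the footnote to Theorem~\ref{thm:alphabetagamma}.

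For the MUWP I would argue that exact enumeration yields an exponential-time $(1,1)$-approximation, i.e.\ $\alpha = \beta = 1$. The key structural fact is that a subset $S$ of coflows can be completed by a deadline $D$ if and only if its maximum port congestion
\[
\rho(S) = \max\Bigl\{\max_{i}\sum_{j \in S}\sum_{o} d_{io}^{j},\ \max_{o}\sum_{j \in S}\sum_{i} d_{io}^{j}\Bigr\}
\]
is at most $D$: the ``only if'' direction holds because each port processes at most one unit per time step, and the ``if'' direction follows from a Birkhoff--von Neumann (equivalently, K\"onig edge-coloring) decomposition of the aggregate demand into at most $\rho(S)$ matchings, each routed in one time step. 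Hence feasibility of a candidate subset is checkable in polynomial time, and iterating over all $2^{n}$ subsets, discarding the infeasible ones, and returning the feasible subset of minimum total unscheduled weight solves the MUWP exactly. This is a $(1,1)$-approximation running in exponential time.

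Finally I would verify the critical assumption of the footnote: the offline $\gamma$-approximation must not inflate the makespan beyond the optimum, so that the schedule of the chosen subset $S_k$ fits inside $\alpha I_k$. Since $\alpha = 1$, the subset returned by the MUWP solver satisfies $\rho(S_k) \le D = \tau_{k+1}-\tau_k$, and the schedule produced by the algorithm of~\cite{ahmadi2016coflows} attains makespan equal to this congestion bound, so it indeed fits in the interval of length $D$; the processing-times-at-least-$1$ assumption holds because the demands $d_{io}^{j}$ are integers. Substituting $\alpha = \beta = 1$ and $\gamma = 4$ into Theorem~\ref{thm:alphabetagamma} gives the competitive ratio $2\alpha\beta + \gamma = 6$ (with an additive $\alpha W = W$ term), as claimed. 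I expect the only genuinely nontrivial point to be the makespan-preservation check, since the rest is mechanical substitution; notably, the congestion characterization of coflow feasibility is exactly what makes both that check and the exact MUWP solver go through.
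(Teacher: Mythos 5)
Your proposal is correct and matches the paper's own argument: instantiate Theorem~\ref{thm:alphabetagamma} with the exponential-time exact MUWP solver ($\alpha=\beta=1$, obtained by enumerating all subsets of coflows) and the offline $4$-approximation of Ahmadi et al.~\cite{ahmadi2016coflows} ($\gamma=4$), giving $2\alpha\beta+\gamma=6$. The congestion-based feasibility test and the makespan-preservation check you spell out are details the paper leaves implicit, but they do not change the route.
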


Furthermore, we can show that the polynomial-time $(2,2)$-approximation for the MUWP for $\infdiv{PD}{}{\sum_j{w_jC_j}}$ of Garg et al.~\cite{garg2007order} can be applied to coflow scheduling with the same approximation guarantees. Combined with the $4$-approximation of Ahmadi et al.~\cite{ahmadi2016coflows}, our framework yields the following.

\begin{corollary}
There exists a polynomial-time $12$-competitive algorithm for online coflow scheduling.
\end{corollary}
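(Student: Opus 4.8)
The plan is to instantiate Theorem~\ref{thm:alphabetagamma} with parameters $\alpha=2$, $\beta=2$, and $\gamma=4$, so that $2\alpha\beta+\gamma = 2\cdot 2\cdot 2 + 4 = 12$, together with the additive $\alpha W = 2W$ term. The value $\gamma=4$ is immediate from the offline $4$-approximation of Ahmadi et al.~\cite{ahmadi2016coflows} for coflow scheduling with identical release times, which is exactly the regime each interval reduces to: within $\alpha I_k$ all selected coflows are available and release times are ignored, as in $\mathcal{OPT}_0$. Thus the entire content of the statement reduces to exhibiting a polynomial-time $(2,2)$-approximation for the coflow MUWP, matching the guarantee Garg et al.~\cite{garg2007order} obtained for $\infdiv{PD}{}{\sum_j w_j C_j}$.

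The key idea is a load-based reduction that exposes the coflow MUWP as a problem of the same combinatorial shape as the concurrent open shop MUWP. First I would associate to each coflow $j$ its $2m$ port loads: the input load $L^{\mathrm{in}}_{ij} = \sum_{o} d^j_{io}$ on each input port $i$ and the output load $L^{\mathrm{out}}_{oj} = \sum_{i} d^j_{io}$ on each output port $o$. Selecting a subset $S$ of coflows then faces one packing constraint per port, $\sum_{j\in S} L^{\mathrm{in}}_{ij}\le D$ and $\sum_{j\in S} L^{\mathrm{out}}_{oj}\le D$, which is structurally identical to the $m$ machine-load constraints of concurrent open shop, only with $2m$ constraints. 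Two facts make this substitution faithful for the MUWP. For the weight ($\beta$) side, the port-load constraints are \emph{necessary} for $S$ to finish by $D$ — a port transfers at most one unit per step — so they define a relaxation whose minimum unscheduled weight lower-bounds that of the true coflow MUWP; hence a set whose unscheduled weight is within a factor $2$ of the relaxed optimum is also within a factor $2$ of the true optimum. For the time ($\alpha$) side, a set $S$ whose port loads are all at most $2D$ can actually be transferred by time $2D$: viewing $S$ as a bipartite multigraph with $d^j_{io}$ parallel edges, its maximum degree equals the maximum port load $\le 2D$, so by K\"onig's edge-coloring theorem it decomposes into at most $2D$ matchings, each schedulable in a single time step. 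This edge-coloring step plays the role that trivial per-machine sequencing plays for concurrent open shop, and it is what lets the factor-$2$ blow-up in time from the rounding be realized as an honest length-$2D$ coflow schedule.

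With this reduction in hand, I would run exactly the LP-rounding procedure of Garg et al.~\cite{garg2007order} on the $2m$ port-load constraints to obtain $S$ with all loads at most $2D$ and unscheduled weight at most twice the relaxed optimum. The step I expect to be the main obstacle is verifying that their rounding guarantee is genuinely insensitive to the number of packing constraints, so that moving from $m$ to $2m$ of them preserves the $(2,2)$ bound rather than degrading with the dimension; this is where I would check that their analysis charges the time and weight losses locally (e.g.\ to a single boundary-crossing job per interval) rather than accumulating over constraints. A secondary point to discharge is the makespan requirement of the footnote to Theorem~\ref{thm:alphabetagamma}: since the selected $S_k$ has maximum port load at most $\alpha D = 2D$, the K\"onig decomposition already certifies a schedule of $S_k$ fitting inside $\alpha I_k$, and one only needs that the offline $\gamma$-approximation of Ahmadi et al.\ does not exceed this makespan. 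Once these are in place, substituting $(\alpha,\beta,\gamma)=(2,2,4)$ into Theorem~\ref{thm:alphabetagamma} yields the claimed polynomial-time $12$-competitive algorithm, with the additive $2W$ term.
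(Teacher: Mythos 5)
Your proposal is correct and takes essentially the same route as the paper: the paper likewise reduces each coflow to its $2m$ port loads (via the Khuller--Purohit reduction to a $2m$-machine concurrent open shop instance), applies the Garg et al.\ LP rounding to obtain a $(2,2)$-approximation for the coflow MUWP, and plugs this together with the offline $4$-approximation of Ahmadi et al.\ into Theorem~\ref{thm:alphabetagamma} to get $2\cdot 2\cdot 2+4=12$. The only difference is cosmetic: you invoke K\"onig's edge-coloring theorem to certify the length-$2D$ schedule of the selected set, whereas the paper simply observes that any idle-time-free schedule finishes each port by its total load $\le 2D$; your version makes that step explicit, and your worry about the rounding degrading with the number of constraints is unfounded since the threshold-$\frac{1}{2}$ rounding bounds each constraint and the weight loss separately.
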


To show that the $(2,2)$-approximation for the MUWP for $\infdiv{PD}{}{\sum_j{w_jC_j}}$ of Garg et al.~\cite{garg2007order} can be applied to coflow scheduling with the same approximation guarantees, we recall the reduction from $\infdiv{\mbox{Coflow}}{}{\sum_j{w_jC_j}}$ to $\infdiv{PD}{}{\sum_j{w_jC_j}}$ given by Khuller and Purohit~\cite{khuller2016brief}. Given an instance of coflow scheduling $\mathcal{I}$, let $L_i^j = \sum_{o=1}^m d_{io}^j$ denote the total amount of data that co-flow $j$ needs to transmit through input port $i$, and similarly, we let $L_o^j = \sum_{i=1}^m d_{io}^j$ or output port $o$. From this, create a concurrent open shop instance $\mathcal{I}'$ with a set $M$ of $2m$ machines (one for each port) and a set $J$ of $n$ jobs (one for each coflow), with processing times $p_{sj}$ set equal to $L^j_s$ for job $j$ on machine $s$.

Now, the MUWP on $\mathcal{I}'$ can be formulated by the following integer program of Garg et al.~\cite{garg2007order}.

\begin{align*}
    \mbox{minimize }   & \sum_{j \in J} w_j (1 - x_j)\  \\
    \mbox{subject to } & \sum_{j \in J} p_{ij} x_j \leq D \quad \forall i \in M\\
                       & x_j \in \{0,1\} \quad \forall j \in J\enspace.
\end{align*}

Let $W'$ denote the optimal unscheduled weight for the MUWP on $\mathcal{I}'$, and define $W$ similarly. The algorithm of Garg et al.~\cite{garg2007order} solves the linear relaxation of this integer program to obtain an optimal fractional solution $\bar{x}$, and outputs the set of jobs $S' = \{j \in J \mid \bar{x}_j \geq \frac{1}{2} \}$. Letting $W^*$ denote the objective function value of an optimal solution of the LP relaxation, it is straightforward to check that the total processing time of $S'$ on any machine is at most $2D$, the total unscheduled weight is at most $2W^*$, and $W^* \leq W'$. Hence, the algorithm of Garg et al.~\cite{garg2007order} is indeed a $(2,2)$-approximation for the MUWP in the concurrent open shop environment. 

\begin{lemma} \label{lem:opt_coflow}
The optimal unscheduled weight for the MUWP on $\mathcal{I}'$ is at most that for the MUWP on $\mathcal{I}$; i.e.,
$W' \leq W\enspace.$
\end{lemma}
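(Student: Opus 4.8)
The plan is to show that any set of coflows that is feasible for the MUWP on the coflow instance $\mathcal{I}$ is also feasible, when reinterpreted as a set of jobs, for the MUWP on the concurrent open shop instance $\mathcal{I}'$, while retaining the same unscheduled weight. Since the two instances share the same underlying set of jobs/coflows and the same weights $w_j$, it then suffices to exhibit the optimal MUWP solution on $\mathcal{I}$ as a feasible solution on $\mathcal{I}'$ of equal objective value; this immediately yields $W' \leq W$.

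Concretely, I would first let $S$ be an optimal solution to the MUWP on $\mathcal{I}$, so that $S$ admits a valid coflow schedule completing every coflow of $S$ by time $D$ and satisfies $\sum_{j \notin S} w_j = W$. The feasibility condition for the MUWP on $\mathcal{I}'$ is the concurrent open shop condition that on every machine $s$ the total processing load fits within the deadline, namely $\sum_{j \in S} p_{sj} = \sum_{j \in S} L^j_s \leq D$ for each port $s$. Thus the entire lemma reduces to the claim that coflow feasibility of $S$ implies these per-port load bounds.

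The key step is a capacity (volume) argument. Fix any input port $i$; output ports are symmetric. In any valid coflow schedule, port $i$ transmits at most one unit of data per unit of time, so over the interval $[0, D]$ it can move at most $D$ units of data in total. Because every coflow in $S$ completes by time $D$, all of the data that $S$ routes through port $i$, totaling $\sum_{j \in S} L^j_i = \sum_{j \in S} \sum_{o=1}^m d^j_{io}$ units, must have been transmitted by port $i$ within $[0, D]$. Hence $\sum_{j \in S} L^j_i \leq D$, which is exactly the load bound for machine $i$ in $\mathcal{I}'$; the identical argument for output ports completes the verification.

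With the load bounds in hand, $S$ is a feasible solution to the MUWP on $\mathcal{I}'$ with unscheduled weight $\sum_{j \notin S} w_j = W$, so the minimum unscheduled weight $W'$ on $\mathcal{I}'$ satisfies $W' \leq W$. I expect the only delicate point to be articulating clearly that the concurrent open shop load constraint is a genuine \emph{necessary} condition for coflow feasibility, rather than merely an artifact of the reduction: this is precisely the counting argument above, and it is where all the content lies, the remainder being bookkeeping over the shared job set and weights.
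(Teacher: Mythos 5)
Your proposal is correct and follows essentially the same route as the paper: both exhibit the optimal MUWP solution for $\mathcal{I}$ as a feasible solution of equal unscheduled weight for $\mathcal{I}'$. The only difference is in how feasibility is verified---the paper directly mirrors the coflow schedule onto the open shop machines, whereas you use a volume argument (each port moves at most $D$ units of data by time $D$, so $\sum_{j \in S} L^j_s \leq D$) together with the standard fact that per-machine load bounds characterize concurrent open shop feasibility; both verifications are sound.
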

\begin{proof}
The proof is essentially identical to that of Lemma 1 in \cite{khuller2016brief}. Let $S$ be the optimal solution to the MUWP for $\mathcal{I}$. Then there exists a schedule of the coflows in $S$ such that every coflow completes by the deadline $D$. Now consider the set $S'$ of corresponding jobs in $\mathcal{I}'$. Processing job $j \in S'$ on machine $s$ whenever data is being processed for coflow $j \in S$ on port $s$ in the schedule for $S$ gives a schedule for $S'$ in which every job also completes by deadline $D$. Thus $S'$ is a feasible solution to the MUWP for $\mathcal{I}'$ with objective function value equal to that of the optimal solution $S$ to the MUWP for $\mathcal{I}$, and the claim follows.
\end{proof}

Let $S$ be the set of coflows in $\mathcal{I}$ corresponding to the jobs $S'$ defined above.

\begin{lemma}
In polynomial time, we can find a schedule for $S$ that completes by time $2D$, and whose total unscheduled weight is at most $2W$.
\end{lemma}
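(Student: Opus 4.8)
The plan is to separate the two claims. The unscheduled-weight bound requires essentially no new work: by construction $S$ consists of exactly the coflows whose corresponding jobs lie in $S'$, so the weight of coflows not in $S$ equals the unscheduled weight of $S'$ in $\mathcal{I}'$, which we already argued is at most $2W^*$. Chaining $W^* \leq W' \leq W$ (the last inequality being Lemma~\ref{lem:opt_coflow}) gives total unscheduled weight at most $2W$. The real content of the lemma is producing an actual coflow schedule that finishes by time $2D$; unlike concurrent open shop, where bounded machine loads trivially admit a completing schedule, coflows are constrained by the matching structure of the non-blocking switch, so the load bound alone does not immediately yield a feasible schedule.

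For the schedule, I would first form the aggregate demand matrix $A = \sum_{j \in S} D^j$, whose entry $A_{io} = \sum_{j \in S} d^j_{io}$ is a non-negative integer. The row sum of $A$ at input port $i$ is $\sum_{j \in S} L^j_i$ and the column sum at output port $o$ is $\sum_{j \in S} L^j_o$; these are precisely the total processing times of the jobs in $S'$ on the corresponding open-shop machines, which the $(2,2)$-selection guarantees are at most $2D$. Hence every row sum and every column sum of $A$ is at most $2D$.

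Next I would interpret $A$ as a bipartite multigraph $G$ with the input ports on one side and the output ports on the other, placing $A_{io}$ parallel edges between $i$ and $o$; by the previous paragraph its maximum degree is $\Delta \leq 2D$. By König's edge-coloring theorem, a bipartite multigraph of maximum degree $\Delta$ is properly $\Delta$-edge-colorable, and such a coloring can be computed in polynomial time (for instance, via repeated bipartite matchings). Each color class is a matching---no input or output port is incident to two edges of the same color---so it corresponds to a single switch-feasible time step. Scheduling the color classes in successive unit time steps therefore transmits all of $A$ within $\Delta \leq 2D$ steps, and since $A$'s entries are integral, $\Delta$ is an integer bounded by $2D$.

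Finally, because the entire aggregate demand is transmitted by time $2D$, in particular every flow of every coflow in $S$ is transmitted by then, so each such coflow completes by $2D$, as required; the polynomial running time follows from that of the edge-coloring step together with the LP-rounding selection. The main obstacle is the middle step: recognizing that bounded port loads are not by themselves sufficient for a coflow schedule, and that the matching/edge-coloring (Birkhoff--von Neumann) structure is exactly the ingredient that converts the open-shop load bound into a valid, switch-feasible schedule of the claimed length.
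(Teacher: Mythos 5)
Your proof is correct, and on the unscheduled-weight half it coincides with the paper's: both arguments chain the factor-$2$ rounding loss through $W^* \leq W' \leq W$ via Lemma~\ref{lem:opt_coflow}. On the scheduling half you take a more careful route than the paper does. The paper simply asserts that ``any schedule for the coflows $S$ without idle time'' finishes every port $s$ by time $\sum_{j\in S} L^j_s \leq 2D$; read literally, this per-port claim is not true, since under the port-capacity constraints of the switch a port can be forced to wait while it still has pending data (e.g., an output port whose only remaining data sits at an input port currently serving a different output), so the existence of a schedule with makespan $\max_s \sum_{j\in S} L^j_s$ does require justification. Your K\"onig/Birkhoff--von Neumann decomposition of the aggregate demand matrix into at most $\Delta \leq 2D$ matchings supplies exactly that justification, and is the standard way to certify that the minimum makespan of a crossbar demand matrix equals its maximum row or column sum. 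The one caveat is the phrase ``placing $A_{io}$ parallel edges'': the number of edges is $\sum_{i,o} A_{io}$, which is pseudo-polynomial when the demands are encoded in binary, so to keep the running time genuinely polynomial you should decompose the weighted matrix directly (repeatedly extract a matching on the support covering every row and column of maximum residual load and remove it with the largest feasible integer multiplicity), which terminates after $O(m^2)$ matchings. With that adjustment your argument is complete and, if anything, tighter than the one in the paper.
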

\begin{proof} We know that for any machine $s$ in $\mathcal{I}'$, $\sum_{j \in S'} p_{sj} = \sum_{j \in S} L_s^j \leq 2D$. Thus, if we take any schedule for the coflows $S$ without idle time, every port $s$ finishes processing data by time $\sum_{j \in S} L_s^j \leq 2D$. Since all coflows complete at the same time when all ports have finished processing the data, we get a schedule in which all coflows in $S$ will complete without idle time by time $2D$.

The total unscheduled weight in $\mathcal{I}$ is the same as the total unscheduled weight in $\mathcal{I}'$. By Lemma~\ref{lem:opt_coflow}, the total unscheduled weight in $\mathcal{I}$ is at most $2W' \leq 2W$.
\end{proof}

Hence, the $(2,2)$-approximation for the MUWP for $\infdiv{PD}{}{\sum_j{w_jC_j}}$ of Garg et al.~\cite{garg2007order} can be applied to $\infdiv{\mbox{Coflow}}{}{\sum_j{w_jC_j}}$ with the same guarantees.

\section{A Randomized Online Scheduling Framework}
\label{sec:rand_framework}
In this section, we show how our ideas can be combined with the randomized framework of Chakrabarti et al.~\cite{chakrabarti1996improved} to develop an analogue of the deterministic framework of Sect.~\ref{sec:framework}.

The framework of Chakrabarti et al.~\cite{chakrabarti1996improved} modify that of Hall et al.~\cite{hall1997scheduling} (see Sect.~\ref{sec:hall_framework}) by setting $\tau_k = \eta 2^k$, where $\eta \in [\frac{1}{2},1)$ is a randomly chosen parameter. After making this choice, the online algorithm proceeds exactly as before, by applying the dual $\rho$-approximation to the MSWP at each interval. 

Let $C^{\mathcal{OPT}}_j$ denote the completion time of job $j$ in an optimal schedule, and let $B_j$ denote the start of the interval $(\tau_{k-1},\tau_k]$ in which job $j$ completes. Chakrabarti et al.~\cite{chakrabarti1996improved} show that if one takes $\eta = 2^{-X}$, where $X$ is chosen uniformly at random from $(0,1]$, then the following holds.

\begin{lemma}[\cite{chakrabarti1996improved}] \label{lem:rand_bound}
$E[B_j] = \frac{1}{2 \ln 2} C^{\mathcal{OPT}}_j\enspace.$
\end{lemma}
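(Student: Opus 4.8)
The plan is to fix a job $j$, abbreviate $C = C^{\mathcal{OPT}}_j$, and compute the distribution of $B_j$ as a function of the random exponent $X$. Substituting $\eta = 2^{-X}$ into $\tau_k = \eta 2^k$ turns the grid into the randomly shifted dyadic points $\tau_k = 2^{k-X}$. By definition, $B_j = \tau_{k-1}$ for the unique index $k$ with $\tau_{k-1} < C \le \tau_k$. First I would rewrite this multiplicatively: setting $e = \log_2(B_j / C) = (k-1) - X - \log_2 C$, the two defining inequalities $\tau_{k-1} < C$ and $C \le \tau_k = 2\tau_{k-1}$ become exactly $e \in [-1,0)$, so that $B_j = C\, 2^{e}$.

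The heart of the argument is to show that, as $X$ ranges uniformly over $(0,1]$, the exponent $e$ is uniform on $[-1,0)$. The key observation is that $k$ is selected precisely so that $e$ is the unique representative of $-(X + \log_2 C) \pmod 1$ lying in $[-1,0)$; equivalently, the grid is log-periodic in $X$ with period $1$, so shifting $X$ by an integer merely relabels the intervals without changing the position of $C$ relative to its containing interval. Since $X$ is uniform on an interval of length one, its residue modulo $1$ is uniform on $[0,1)$, and therefore $e$ is uniform on $[-1,0)$ regardless of the constant offset $\log_2 C$. I expect this reduction---dealing with the fact that the interval index $k$ is itself a random function of $X$---to be the main obstacle, and the mod-$1$ (log-periodicity) viewpoint is what makes it clean.

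Once uniformity of $e$ is in hand, the conclusion is a one-line computation:
\[
E[B_j] = C \cdot E\!\left[2^{e}\right] = C \int_{-1}^{0} 2^{e}\, de = C \cdot \frac{2^{e}}{\ln 2}\Big|_{-1}^{0} = \frac{1}{2\ln 2}\, C,
\]
which is the claimed identity. I would close by observing that this used no property of $j$ beyond the single value $C = C^{\mathcal{OPT}}_j$, so the same identity holds for every job, and the bound can be summed over $j$ when bounding the expected cost of the algorithm.
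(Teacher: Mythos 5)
Your proof is correct. Note, however, that the paper itself gives no proof of this lemma: it is imported verbatim from Chakrabarti et al.~\cite{chakrabarti1996improved}, so there is no in-paper argument to compare against. Your derivation is the standard one underlying that reference: writing $\tau_k = 2^{k-X}$, observing that the log-position $e = \log_2(B_j/C)$ of the interval start below $C$ is the representative of $-(X+\log_2 C) \bmod 1$ in $[-1,0)$, concluding from the uniformity of $X$ that $e$ is uniform on $[-1,0)$ independently of the offset $\log_2 C$, and integrating $\int_{-1}^{0} 2^{e}\,de = \frac{1}{2\ln 2}$. All of these steps check out, including the translation of $\tau_{k-1} < C \le \tau_k$ into $e \in [-1,0)$ and the final arithmetic, which matches the paper's resulting $\frac{2}{\ln 2}\rho \le 2.89\rho$ ratio. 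The only point you leave implicit is the existence of a containing interval with $k \ge 1$, i.e.\ that $C > \tau_0 = \eta$; this holds because the paper assumes all processing times are at least $1$ while $\eta < 1$, and it would be worth one sentence to say so. In short: you have supplied a correct, self-contained proof of a statement the paper only cites.
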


Hall et al.~\cite{hall1997scheduling} showed how to produce a schedule of total weighted completion time at most $4 \rho \sum_{j} w_j B_j$. By linearity of expectation, one can apply Lemma~\ref{lem:rand_bound}, so that the schedule produced has total weighted completion at most $\frac{2}{\ln 2} \rho \sum_{j} w_j C^{\mathcal{OPT}}_j$, resulting in a randomized $\frac{2}{\ln 2} \rho \leq 2.89 \rho$-competitive algorithm.

We can directly adapt this idea of randomly choosing the interval end points in our minimization framework to develop a randomized version of Theorem~\ref{thm:alphabetagamma}. Specifically, we take $\tau_k = \eta 2^k$, using the same $\eta$ above, and run the framework described in Sect.~\ref{sec:min_framework} using this new choice of interval end points. Note that Lemma~\ref{lem:alphabeta}, Lemma~\ref{lem:local_delta}, and Lemma~\ref{lem:global_delta} still hold with our new choice of $\tau_k$.

Let $\mathcal{A}'$ denote this randomized algorithm and using the same notation as in Sect.~\ref{sec:min_framework}, we can achieve the following result.

\begin{theorem} \label{thm:rand_alphabetagamma}
Algorithm $\mathcal{A}'$ is $(\frac{1}{\ln2}\alpha \beta + \gamma)$-competitive in expectation, with an additive $\alpha W$ term.
\end{theorem}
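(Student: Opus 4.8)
The plan is to reuse the algebraic skeleton of the proof of Theorem~\ref{thm:alphabetagamma} essentially verbatim, since the excerpt already records that Lemmas~\ref{lem:alphabeta}, \ref{lem:local_delta}, and~\ref{lem:global_delta} remain valid under the randomized endpoints $\tau_k = \eta 2^k$; the only genuinely new ingredient is how the optimal schedule's contribution is estimated at the very last step. First I would condition on an arbitrary realization of $\eta$ (equivalently of $X$) and run the deterministic chain of inequalities of Theorem~\ref{thm:alphabetagamma} for that fixed $\eta$. Two facts keep this clean: with $\tau_k=\eta 2^k$ we have the \emph{exact} doubling $\tau_{k+1}-\tau_k = 2(\tau_k-\tau_{k-1})$ for every $k$, so the reindexing that converts $W^{\mathcal{A}'}_{\alpha\tau_k}$ into $W^{\mathcal{A}'}_{\alpha\tau_{k+1}}$ contributes a clean factor of $2$; and Lemma~\ref{lem:alphabeta} then replaces $W^{\mathcal{A}'}_{\alpha\tau_{k+1}}$ by $\beta\,W^{\mathcal{OPT}}_{\tau_k}$, with the first-interval contribution absorbed into the additive term exactly as in the deterministic case. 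This yields, for each fixed $\eta$, a per-realization bound of the same form, namely
\[
\sum_{j\in J} w_j C_j^{\mathcal{A}'} \;\le\; 2\alpha\beta \sum_k (\tau_k - \tau_{k-1}) W^{\mathcal{OPT}}_{\tau_k} + \alpha W + \sum_{j\in J} w_j \delta_j .
\]

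The crucial departure is in how I treat the sum $\sum_k (\tau_k - \tau_{k-1}) W^{\mathcal{OPT}}_{\tau_k}$. Instead of the loose deterministic estimate $\sum_k (\tau_k-\tau_{k-1}) W^{\mathcal{OPT}}_{\tau_k}\le \sum_j w_j C_j^{\mathcal{OPT}}$, I would expand $W^{\mathcal{OPT}}_{\tau_k}=\sum_{j:\,C_j^{\mathcal{OPT}}>\tau_k} w_j$, interchange the order of summation, and telescope the resulting inner sum. Writing $B_j$ for the start of the interval $(\tau_{k-1},\tau_k]$ containing $C_j^{\mathcal{OPT}}$, the inner sum collapses to $B_j$, giving the identity $\sum_k (\tau_k - \tau_{k-1}) W^{\mathcal{OPT}}_{\tau_k} = \sum_{j\in J} w_j B_j$. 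This is exactly the quantity controlled by Lemma~\ref{lem:rand_bound}, which is why introducing randomness into the endpoints pays off here and nowhere else.

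Finally I would take expectations of the displayed per-realization bound over the random choice of $\eta$. Because the optimal schedule, and hence each $C_j^{\mathcal{OPT}}$, is non-random, the additive $\alpha W$ term and, via Lemma~\ref{lem:global_delta}, the term $\sum_{j} w_j\delta_j\le \gamma\sum_{j} w_j C_j^{\mathcal{OPT}}$ pass through the expectation unchanged. By linearity of expectation together with Lemma~\ref{lem:rand_bound}, $E\big[\sum_j w_j B_j\big]=\frac{1}{2\ln 2}\sum_j w_j C_j^{\mathcal{OPT}}$, so the leading coefficient becomes $2\alpha\beta\cdot\frac{1}{2\ln 2}=\frac{1}{\ln 2}\alpha\beta$; combining with the $\gamma$ term gives the claimed $\big(\frac{1}{\ln 2}\alpha\beta+\gamma\big)$ competitive ratio with an additive $\alpha W$.

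The main obstacle I anticipate is not any individual calculation but making the exchange of expectation rigorous in the presence of coupling: the schedule produced by $\mathcal{A}'$, the offsets $\delta_j$, the selected sets $S_k$, and the endpoints $\tau_k$ are all functions of the single random $\eta$, so I must first establish the displayed inequality deterministically for \emph{every} fixed $\eta$ and only then average. I would also need to check that the telescope defining $\sum_j w_j B_j$ closes correctly at its lower end for each $\eta$, which is precisely where the assumption that all processing times are at least $1$ (together with the finite time horizon) guarantees that only finitely many intervals carry weight. Everything except the $B_j$ term is handled deterministically exactly as in Theorem~\ref{thm:alphabetagamma}; Lemma~\ref{lem:rand_bound} is invoked once, for that term alone.
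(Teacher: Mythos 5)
Your proposal is correct and follows essentially the same route as the paper's proof: the deterministic chain of inequalities from Theorem~\ref{thm:alphabetagamma} is reused verbatim, the identity $\sum_{k}(\tau_k-\tau_{k-1})W^{\mathcal{OPT}}_{\tau_k}=\sum_{j}w_jB_j$ is established (you simply spell out the interchange-and-telescope argument that the paper leaves implicit), and Lemma~\ref{lem:rand_bound} is applied once via linearity of expectation to obtain the factor $\frac{1}{\ln 2}\alpha\beta$.
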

\begin{proof}
The same steps as in the proof of Theorem~\ref{thm:alphabetagamma} yield
\[
\sum_{j \in J} w_j C^{\mathcal{A}'}_j \leq 2\alpha \beta \sum_{k=1}^L(\tau_{k} - \tau_{k-1})W_{ \tau_{k}}^\mathcal{OPT} + \alpha W + \sum_{j\in J}w_j\delta_j\enspace.
\]
By definition of $B_j$, we notice that 
\[
\sum_{k=1}^L(\tau_{k} - \tau_{k-1})W_{ \tau_{k}}^\mathcal{OPT} = \sum_{j \in J} w_j B_j\enspace.
\]
By linearity of expectation and Lemma~\ref{lem:rand_bound}, we conclude that
\[
E[\sum_{j \in J} w_j C^{\mathcal{A}'}_j] \leq (\frac{1}{\ln 2}\alpha\beta + \gamma)\sum_{j\in J}w_j C_j^\mathcal{OPT} + \alpha W\enspace.
\]
\end{proof}

Using the guarantee of Theorem~\ref{thm:rand_alphabetagamma}, we can instantiate this framework in various scheduling settings and find values for $\alpha,\beta,\gamma$ to achieve improved competitive ratios over our deterministic framework. The results obtained when applying the same subroutines as we did for the framework of Sect.~\ref{sec:min_framework} are summarized in Table~\ref{table:results}.

\section{Applications to Concurrent Cluster Scheduling}
\label{sec:ccs}
We now describe the concurrent cluster model introduced by Murray et al.~\cite{murray2016scheduling} and give the first online algorithm in this setting. In the concurrent cluster model, there is a set $M$ of $m$ clusters, where each cluster $i \in M$ consists of a set of $m_i$ parallel machines. We are given a set of jobs $J$, where each job $j$ has $m$ subjobs, each corresponding to one of the clusters. Each subjob is specified by a set $T_{ji}$ of tasks to be completed by the machines in cluster $i$. Each task $t$ in $T_{ji}$ has a processing requirement of $p_{jit}$ time units. The subjobs of a given job may be processed concurrently across multiple clusters, and the tasks of a given subjob may be processed concurrently across multiple machines within the same cluster. A job is complete when all of its subjobs are complete, and a subjob is complete when all of its tasks are complete. A task can be completed by any machine within the cluster of that task. Furthermore, the speed of machines within a cluster may vary, but for our purposes, we assume that all of the machines in each cluster are identical. In the three field notation, we denote this by $\infdiv{CC}{v=1}{\sum_j w_j C_j}$, where $v = 1$ means the machines all have the same speed. The goal is to schedule the jobs to minimize total weighted completion time. Among other results for variants of this problem, Murray et al.~\cite{murray2016scheduling} gave an offline 3-approximation algorithm when all subjobs have identical release times.

Here, we present a $(4,2)$-approximation for the MUWP in the concurrent cluster setting. Using this in conjunction with the offline 3-approximation algorithm of Murray et al.~\cite{murray2016scheduling}, our framework gives the following online guarantee.

\begin{corollary} \label{ref:cc_19}
There exists a polynomial time 19-competitive algorithm for scheduling in the concurrent cluster model.
\end{corollary}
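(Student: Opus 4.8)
The statement follows immediately from Theorem~\ref{thm:alphabetagamma} once the two ingredients are in place: the offline $3$-approximation of Murray et al.~for $\infdiv{CC}{v=1,r_j=0}{\sum_j w_j C_j}$ supplies $\gamma = 3$, and the $(4,2)$-approximation for the MUWP announced just above supplies $\alpha = 4$, $\beta = 2$. Since this $\gamma$-approximation is a list-scheduling routine, it does not increase the makespan, so the footnote assumption of Section~\ref{sec:min_framework} holds. Plugging in gives a competitive ratio of $2\alpha\beta + \gamma = 2\cdot 4 \cdot 2 + 3 = 19$, with the additive $\alpha W = 4W$ term. Thus the only real work is to exhibit the $(4,2)$-approximation for the MUWP in the concurrent cluster model, and the plan is to obtain it by a reduction to the concurrent open shop MUWP, in the same spirit as the coflow reduction of Section~\ref{sec:coflow}.

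The reduction I would use is as follows. For cluster $i$ and job $j$, set $P_{ji} = \sum_{t \in T_{ji}} p_{jit}$ and $q_{ji} = \max_{t \in T_{ji}} p_{jit}$, and build a concurrent open shop instance $\mathcal{I}'$ with one machine per cluster and processing time $p_{ij} = P_{ji}/m_i$ (the average load of subjob $j$ on cluster $i$). First I would preprocess: let $B$ be the set of jobs having some task of size exceeding $D$. Any $j \in B$ cannot complete by the deadline $D$ in the cluster model, so $\mathcal{OPT}$ must leave all of $B$ unscheduled; hence the cluster-MUWP optimum splits as $w(B)$ plus the optimum $\mathrm{OPT}'$ taken over $J \setminus B$. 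I would run the reduction only on $J \setminus B$ and charge $w(B)$ to the algorithm's unscheduled weight ``for free,'' exactly the accounting used in Lemma~\ref{lem:opt_coflow}. The key feasibility fact, analogous to Lemma~\ref{lem:opt_coflow}, is that any cluster-feasible subset with deadline $D$ has total work $\sum_{j} P_{ji} \le m_i D$ on each cluster, hence open-shop load $\sum_j p_{ij} \le D$; so the open-shop MUWP optimum on $J\setminus B$ is at most $\mathrm{OPT}'$.

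Now I would apply the Garg et al.~$(2,2)$ LP-rounding algorithm to $\mathcal{I}'$, obtaining a set $S'$ whose open-shop load is at most $2D$ on every machine and whose unscheduled weight is at most $2\,(\text{open-shop optimum}) \le 2\,\mathrm{OPT}'$; combined with the free term, the algorithm's total unscheduled weight is at most $w(B) + 2\,\mathrm{OPT}' \le 2\bigl(w(B) + \mathrm{OPT}'\bigr) = 2\,\mathrm{OPT}$, giving $\beta = 2$. To translate $S'$ back into a genuine cluster schedule, on each cluster $i$ I would list-schedule the bag of selected tasks $\bigcup_{j \in S'} T_{ji}$ across the $m_i$ identical machines. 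The total work is $\sum_{j \in S'} P_{ji} = m_i \sum_{j \in S'} p_{ij} \le 2 m_i D$ and, after preprocessing, every task is at most $D$. Using the standard bound that list scheduling on identical machines yields makespan at most $2\max(\text{work}/m_i,\ \text{max task}) \le 2\max(2D, D) = 4D$, every selected job completes by $4D$, establishing $\alpha = 4$.

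The main obstacle is precisely this passage between the parallel-machine cluster problem and its single-machine open-shop surrogate. Replacing a subjob by its average load $P_{ji}/m_i$ loses the information about large individual tasks, so the surrogate's load constraint alone does not guarantee a short cluster schedule; the preprocessing step that discards jobs with a task exceeding $D$ is what repairs this, and the delicate point is verifying that this discard is genuinely free, i.e.~that it preserves both the $\beta = 2$ bound (through the $w(B)$ accounting above) and the feasibility comparison against $\mathcal{OPT}$. The remaining care is in the list-scheduling makespan estimate, where using the clean factor-$2$ bound against $\max(\text{work}/m_i, \text{max task})$ is what produces the $4D$ window; a tighter $\text{work}/m_i + \text{max task}$ bound would shave the stretch, but the factor-$2$ form is what cleanly yields the stated $(4,2)$ guarantee and hence the competitive ratio of $19$.
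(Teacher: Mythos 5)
Your proposal is correct and follows essentially the same route as the paper: the paper also plugs $\gamma=3$ (Murray et al.) and a $(4,2)$-approximation for the MUWP into Theorem~\ref{thm:alphabetagamma}, obtains the $(4,2)$-approximation by solving an LP with the per-cluster load constraints $\sum_j \frac{P_{ji}}{m_i}x_j \le D$, rounding at threshold $\frac{1}{2}$, and list-scheduling each cluster to get makespan at most $4D$ (Lemmas~\ref{lem:cc_22_lp_approx} and~\ref{lem:cc_42}). The only cosmetic difference is that you handle oversized tasks by preprocessing away jobs with a task exceeding $D$, whereas the paper encodes this directly via the LP constraints $B_{ji}x_j \le D$; both yield the same guarantee (and, as you note, your additive list-scheduling bound would in fact give a $(3,2)$-approximation and a slightly better ratio).
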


Now recall that the MUWP is the following: given a set of jobs and a deadline $D$, find a subset of the jobs that can be completed by time $D$ and minimizes the total weight of unscheduled jobs. The concurrent cluster setting differs from the settings we have already seen because it is nontrivial to even decide if a given subset of jobs can be scheduled to complete by a certain deadline.

For each job $j$ and subjob $i$ of $j$, let $P_{ji} = \sum_{t \in T_{ji}} p_{jit}$ denote the total processing requirements of the $i$-th subjob of job $j$, and let $B_{ji} = \max_{t \in T_{ji}} p_{jit}$ denote the processing time of the largest task of subjob $i$ of job $j$. We can then formulate the following LP relaxation of MUWP, which is similar to the LP relaxation of Garg et al.~\cite{garg2007order} in the concurrent open shop setting:

\begin{align*}
	\mbox{min} & \sum_{j\in J} w_j (1-x_j) \\
    \mbox{subject to } & \sum_{j \in J} \frac{P_{ji}}{m_i} x_j \leq D \quad \forall i \in M \\
                       & B_{ji} x_j \leq D \quad \forall j \in J, i \in M \\
		       & 0 \leq x_j \leq 1 \quad \forall j \in J\enspace.
\end{align*}

It is straightforward to see why this is a relaxation of the problem. Indeed, the first set of constraints requires that for each cluster $i$, the total processing time of all jobs in a feasible solution does not add up to more than the deadline. The second set of constraints requires that the largest task in each subjob can be completed on one machine by the deadline.

\begin{lemma} \label{lem:cc_22_lp_approx}
There exists a polynomial-time $(2,2)$-approximation to the above LP for the MUWP in the concurrent cluster model.
\end{lemma}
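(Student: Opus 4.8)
The plan is to mirror the LP-rounding argument that Garg et al.~\cite{garg2007order} use for the concurrent open shop setting, adapting it to the two families of constraints in the displayed LP. First I would solve the LP relaxation in polynomial time with any standard solver; since it has $n$ variables and only $O(nm)$ constraints, this is efficient. Let $\bar{x}$ be an optimal fractional solution and let $W^{*} = \sum_{j \in J} w_j(1 - \bar{x}_j)$ denote its objective value. The rounding step is the natural threshold rule: output the set $S = \{\, j \in J : \bar{x}_j \geq \tfrac{1}{2} \,\}$, exactly as in the concurrent open shop case.

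I would then verify the two defining properties of a $(2,2)$-approximation to the LP, both of which follow from the single observation that $j \in S$ implies $2\bar{x}_j \geq 1$. For the unscheduled-weight bound, note that each $j \notin S$ has $\bar{x}_j < \tfrac{1}{2}$, hence $1 - \bar{x}_j > \tfrac{1}{2}$ and so $w_j < 2 w_j(1 - \bar{x}_j)$; summing over $j \notin S$ gives $\sum_{j \notin S} w_j < 2 W^{*}$, the factor-$2$ guarantee on the objective. For feasibility up to a doubled deadline I would check the two constraint families separately. For each cluster $i$, $\sum_{j \in S} \frac{P_{ji}}{m_i} \leq 2 \sum_{j \in S} \frac{P_{ji}}{m_i}\bar{x}_j \leq 2\sum_{j \in J} \frac{P_{ji}}{m_i}\bar{x}_j \leq 2D$, so the total speed-normalized work assigned to each cluster is at most $2D$. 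For the large-task constraints, each $j \in S$ satisfies $B_{ji}\bar{x}_j \leq D$ with $\bar{x}_j \geq \tfrac{1}{2}$, hence $B_{ji} \leq 2D$ for every subjob $i$, so no individual task in the selected set exceeds $2D$.

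Together these show that $S$ is an integral selection satisfying every LP constraint with $D$ replaced by $2D$ and with unscheduled weight at most $2W^{*}$; since $W^{*}$ lower-bounds the optimal integral unscheduled weight, this is precisely a $(2,2)$-approximation to the LP. I expect no serious obstacle here, as the argument is a direct generalization of the Garg et al.~rounding. The only genuinely new element is the second constraint family $B_{ji}x_j \leq D$, and the subtlest point to get right is its role: bounding the largest task by $2D$ is exactly what will later allow a list-scheduling subroutine to convert the per-cluster work bound of $2D$ into a true makespan bound (via a Graham-style inequality of the form ``work-per-machine plus maximum task size''), which is how this $(2,2)$-LP-approximation is ultimately upgraded to the $(4,2)$-approximation for the MUWP itself.
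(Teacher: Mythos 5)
Your proposal is correct and follows essentially the same route as the paper: solve the LP, threshold at $\bar{x}_j \geq \frac{1}{2}$, and observe that both the per-cluster load constraints and the large-task constraints are violated by at most a factor of $2$ while the unscheduled weight is at most $2W^{*}$. You actually spell out the verification of the $B_{ji}x_j \leq D$ constraints and their later role in the list-scheduling step more explicitly than the paper's proof does, but the argument is the same.
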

\begin{proof}
The proof is similar to that of Garg et al.~\cite{garg2007order}. We can solve the LP in polynomial time to obtain an optimal solution $\bar{x}$, and return the set $S = \{j \in J \mid \bar{x}_j \geq \frac{1}{2}\}$. It follows that the total processing time of $S$ is at most $2D$ and the total weight of jobs not in $S$ is at most twice the value of the optimum solution $\bar{x}$ of the LP.
\end{proof}

Note that the set obtained from Lemma~\ref{lem:cc_22_lp_approx} does not necessarily have a schedule which completes by time $2D$ in the MUWP for the concurrent cluster model. However, if we relax the deadline further from $2D$ to $4D$, then we can guarantee the existence of such a schedule.

\begin{lemma} \label{lem:cc_42}
There exists a feasible schedule of the set $S$ from Lemma~\ref{lem:cc_22_lp_approx} that completes by time $4D$.
\end{lemma}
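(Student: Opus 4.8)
The plan is to schedule each cluster independently using a greedy (non-idling) list-scheduling rule and to bound the makespan of each cluster via Graham's classical analysis of scheduling on identical parallel machines. The two facts about $S$ that I need both come out of the LP rounding in Lemma~\ref{lem:cc_22_lp_approx}. First, for every cluster $i$, rounding up the coordinates with $\bar{x}_j \geq \frac12$ and using the first family of LP constraints gives $\sum_{j \in S} \frac{P_{ji}}{m_i} \leq 2 \sum_{j\in J}\frac{P_{ji}}{m_i}\bar{x}_j \leq 2D$; equivalently, the total work assigned to cluster $i$ is at most $2 m_i D$. Second, the constraints $B_{ji}\bar{x}_j \leq D$ together with $\bar{x}_j \geq \frac12$ for $j \in S$ yield $B_{ji} \leq 2D$ for every $j \in S$, so every individual task placed in cluster $i$ has processing time at most $2D$.

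Next I would observe that the clusters are completely decoupled: the subjobs of a job are processed on distinct clusters and may run concurrently, so there is no scheduling interaction across clusters, and it suffices to produce, for each cluster $i$ separately, a schedule of all its tasks $\{p_{jit} : j\in S,\, t\in T_{ji}\}$ on the $m_i$ identical machines that finishes by time $4D$. For a fixed cluster $i$ I would run any greedy list-scheduling algorithm, which never idles a machine while a task is waiting, and invoke the standard makespan bound
\[
C^{(i)}_{\max} \;\leq\; \frac{1}{m_i}\sum_{j \in S} P_{ji} \;+\; \Bigl(1 - \tfrac{1}{m_i}\Bigr)\max_{j\in S,\, t\in T_{ji}} p_{jit} \;\leq\; \frac{1}{m_i}\sum_{j\in S}P_{ji} \;+\; \max_{j\in S} B_{ji}.
\]
Substituting the two bounds from the previous paragraph, each of the two terms on the right is at most $2D$, so $C^{(i)}_{\max} \leq 4D$.

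Finally, since each cluster finishes all of its tasks by time $4D$, every subjob completes by $4D$ (a subjob is done once all its tasks are done), and therefore every job in $S$ completes by $4D$ (a job is done once all its subjobs are done). This produces the required feasible schedule.

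The step I expect to be the crux is the makespan bound: one must recall that for identical parallel machines a greedy, non-idling schedule has makespan at most the average load plus the largest task size, and then recognize that the two LP-derived quantities---average load $\leq 2D$ and largest task $\leq 2D$---are precisely the two summands of that bound. The remaining subtlety is confirming that the per-task guarantee $B_{ji} \leq 2D$ (not merely the aggregate load bound) survives the rounding, which is exactly why the LP carries the second family of constraints; everything else is bookkeeping across the cluster/subjob/task hierarchy.
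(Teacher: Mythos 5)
Your proposal is correct and follows essentially the same approach as the paper: list scheduling on each cluster independently, bounding the makespan by average load plus largest task, with each term at most $2D$ via the LP rounding of Lemma~\ref{lem:cc_22_lp_approx}. Your write-up is somewhat more explicit about why the two LP constraint families yield the two required bounds, but the argument is the same.
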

\begin{proof}
On each cluster $i$, we apply the classical list scheduling algorithm for minimizing the makespan on a set of identical parallel machines. Thus, the makespan of cluster $i$ is at most $\sum_{j \in S} \frac{P_{ji}}{m_i} + \max_{j \in S} B_{ji}$. Furthermore, Lemma~\ref{lem:cc_22_lp_approx} implies that each of these terms is at most $2D$. So, the makespan of cluster $i$ is at most $4D$, and the lemma follows.
\end{proof}

Lemma~\ref{lem:cc_42} directly implies Corollary~\ref{ref:cc_19}.

\section{An optimal exponential-time offline algorithm for concurrent open shop}
\label{sec:cos_exp_details}
In this section, we present an offline exponential-time algorithm that optimally solves $\infdiv{PD}{r_j=0}{\sum{w_jC_j}}$. The algorithm is based on the Held-Karp algorithm (see~\cite{held1962dynamic}) for the traveling salesperson problem.

First, we state a definition and a lemma that restricts the number of possible schedules.

\begin{definition}[\cite{mastrolilli2010minimizing}]
A schedule is a \emph{permutation schedule} if it is nonpreemptive, contains no unnecessary idle time, and there exists a permutation of the jobs $\sigma: J \rightarrow J$ such that on every machine, the jobs are scheduled in the order specified by $\sigma$.
\end{definition}

Next, we rephrase a lemma from~\cite{mastrolilli2010minimizing} that allows us to restrict our attention to permutation schedules.

\begin{lemma}[\cite{mastrolilli2010minimizing}] \label{lem:cos_perm_sched}
For any schedule for concurrent open shop with identical release times, there exists a corresponding permutation schedule  in which the completion time of every job is no later than it is in the original schedule.
\end{lemma}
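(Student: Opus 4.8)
The plan is to construct the desired permutation schedule directly from the ordering induced by the original schedule's completion times, and then verify job-by-job that no completion time increases. Starting from an arbitrary schedule with completion times $C_j$, I would relabel the jobs as $j_1, \ldots, j_n$ so that $C_{j_1} \le C_{j_2} \le \cdots \le C_{j_n}$, breaking ties arbitrarily, and let $\sigma$ be this ordering. The candidate permutation schedule processes the components in the order $\sigma$ on every machine, starting at time $0$ with no idle time and no preemption; by construction this is a permutation schedule, so it suffices to bound its completion times.

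The key computation is the completion time of $j_k$ in the new schedule. Since each machine $i$ processes the components $j_1, \ldots, j_k$ consecutively in that order, the component of $j_k$ on machine $i$ finishes at time $\sum_{l=1}^k p_{i,j_l}$, and hence the completion time of $j_k$ in the new schedule is $C'_{j_k} = \max_{i \in M} \sum_{l=1}^k p_{i,j_l}$. It then remains to show $C'_{j_k} \le C_{j_k}$ for every $k$.

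The heart of the argument is a work-counting bound applied to the original schedule. By the choice of labeling, every job among $j_1, \ldots, j_k$ has original completion time at most $C_{j_k}$, so in the original schedule all of their components are fully processed by time $C_{j_k}$; in particular, on any fixed machine $i$, the components $j_1, \ldots, j_k$ contribute a total of $\sum_{l=1}^k p_{i,j_l}$ units of work, all completed within the interval $[0, C_{j_k}]$. Since a single machine performs at most one unit of work per unit time, the total work it can complete in $[0, C_{j_k}]$ is at most $C_{j_k}$, which gives $\sum_{l=1}^k p_{i,j_l} \le C_{j_k}$. Taking the maximum over machines $i$ yields $C'_{j_k} \le C_{j_k}$, and since this holds for all $k$, the lemma follows.

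I expect no serious obstacle here: the construction is explicit and the only inequality required is the elementary work-counting bound. The single point needing care is that this bound must hold for the \emph{original} schedule without assuming it is itself a permutation schedule (indeed it holds even if the original schedule is preemptive), which is fine because the argument only uses that each machine does at most $C_{j_k}$ units of work in $[0, C_{j_k}]$ and that the relevant components are all finished by then. Tie-breaking among equal completion times is likewise harmless, since the argument goes through for any fixed ordering consistent with the sorted completion times.
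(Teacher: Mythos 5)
Your proof is correct, and it is the standard exchange/work-counting argument for this fact; the paper itself states the lemma without proof, citing Mastrolilli et al., where essentially this same argument (sort jobs by original completion time, schedule that permutation on every machine, and bound each prefix load on each machine by the corresponding original completion time) is used. The one detail worth noting is that with a common release time $r>0$ the new schedule starts at $r$ and the work-counting bound becomes $\sum_{l=1}^{k} p_{i,j_l}\le C_{j_k}-r$, which still yields $C'_{j_k}\le C_{j_k}$.
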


For any subset $S \subseteq J$ of jobs and job $j \in S$, let $C(S, j)$ denote the weighted completion time of the optimal schedule for $S$ that completes job $j$ last. If $S$ only contains one job $j$, then certainly there is only one feasible schedule, so we have that $C(S,j) = w_j \cdot \max_{i \in M} p_{ij}$. If $S$ contains at least two jobs, consider an optimal schedule for $S$ in which $j$ finishes last. Suppose job $i \in S \setminus \{j\}$ finishes second-to-last in this schedule. Since any contiguous subsequence of jobs in an optimal schedule is itself an optimal schedule for that subset of jobs,
\[
C(S,j) = C(S \setminus \{j\}, i) + w_j \max_{k \in M} \sum_{q \in S} p_{kq}.
\]

Algorithm~\ref{alg:dpforcos} contains the pseudocode for this algorithm. Using standard backtracking techniques, one can obtain an optimal schedule using this dynamic program.

\begin{algorithm}[H]
\caption{An optimal offline dynamic program for $\infdiv{PD}{r_j=0}{\sum{w_jC_j}}$}
\label{alg:dpforcos}
\begin{algorithmic}
\FOR{$j \gets 1, \ldots, n$}
	\STATE $C(\{j\} ,j) \gets w_j \max_{i \in M}p_{ij}$
\ENDFOR
\FOR{$t \gets 2, \ldots, n$}
	\FORALL{$S \subseteq J$ s.t. $|S| = t$}
    	\FORALL{$j \in S$}
            \STATE $prev \gets \min_{i \in S \setminus\{ j\}}\{C(S \setminus \{ j\}, i)\}$
            \STATE $curr \gets w_j \max_{k \in M} \sum_{q \in S} p_{kq}$
            \STATE $C(S,j) \gets prev + curr$
        \ENDFOR
    \ENDFOR
\ENDFOR

\STATE{\RETURN $\displaystyle \min_{j\in J}C(J, j)$}
\end{algorithmic}
\end{algorithm}

\begin{theorem} \label{thm:cos_exp_dp}
Algorithm~\ref{alg:dpforcos}  optimally solves the offline concurrent open shop with identical release times problem in $O(m n^2 2^n)$ time.
\end{theorem}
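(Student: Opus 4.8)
The plan is to establish correctness and the running time separately, with correctness resting on two pillars: the reduction to permutation schedules furnished by Lemma~\ref{lem:cos_perm_sched}, and an optimal-substructure argument that validates the recurrence by induction on $|S|$.

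First I would record the structural observation that makes the whole dynamic program collapse to a clean recurrence. In any permutation schedule of a set $S$, every machine $k$ processes all jobs of $S$ in the common order $\sigma$ with no idle time (the definition forbids unnecessary idle time, and with identical release times at $0$ none is needed), so the job occupying the last position in $\sigma$ finishes on machine $k$ at time $\sum_{q \in S} p_{kq}$. Hence the completion time of the last job equals $\max_{k \in M} \sum_{q \in S} p_{kq}$, the makespan of $S$, and crucially this value is \emph{independent} of the ordering $\sigma$: it depends only on the set $S$. This is exactly the quantity $curr$ in the algorithm.

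Next I would define $C(S,j)$ to be the minimum weighted completion time over permutation schedules of $S$ in which $j$ completes last, and prove by induction on $|S|$ that Algorithm~\ref{alg:dpforcos} computes this value. The base case $|S|=1$ is immediate, since the only schedule of $\{j\}$ has completion time $\max_{i \in M} p_{ij}$. For the inductive step I would argue that any permutation schedule of $S$ ending with $j$ is a permutation schedule of $S \setminus \{j\}$ followed by $j$; the contributions of the jobs in $S \setminus \{j\}$ depend only on the prefix schedule, while the contribution of $j$ is the constant $w_j \max_{k \in M}\sum_{q \in S} p_{kq}$ by the observation above. Therefore minimizing over such schedules decouples into minimizing the weighted completion time of a permutation schedule of $S \setminus \{j\}$—which, since every such schedule ends with some job $i$, equals $\min_{i \in S \setminus \{j\}} C(S \setminus \{j\}, i)$ by the inductive hypothesis—plus the constant term. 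This is precisely the update $C(S,j) \gets prev + curr$.

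Finally I would conclude correctness by noting that $\min_{j \in J} C(J,j)$ is the optimal weighted completion time over all permutation schedules of $J$, and invoke Lemma~\ref{lem:cos_perm_sched} to upgrade this to optimality over all schedules: the permutation optimum is at most the general optimum (any schedule converts to a permutation schedule without increasing any completion time) and at least it (permutation schedules are a special case), so the two coincide. For the running time I would count the $(S,j)$ pairs processed: there are $\sum_{S \subseteq J} |S| = n 2^{n-1}$ of them, and each requires computing $curr$ by evaluating $\max_{k \in M} \sum_{q \in S} p_{kq}$ in $O(mn)$ time together with $prev$ in $O(n)$ time, for $O(mn)$ work per pair, giving $O(mn^2 2^n)$ overall. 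The main obstacle is the inductive step, specifically pinning down why the makespan term is independent of the arrangement of $S \setminus \{j\}$ so that the minimization cleanly decouples; once that independence is secured, the remaining arguments are routine.
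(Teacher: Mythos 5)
Your proposal is correct and follows essentially the same route as the paper: reduce to permutation schedules via Lemma~\ref{lem:cos_perm_sched}, justify the recurrence $C(S,j) = \min_{i \in S\setminus\{j\}} C(S\setminus\{j\},i) + w_j\max_{k\in M}\sum_{q\in S}p_{kq}$ by optimal substructure, and count $O(n\cdot 2^n)$ subproblems at $O(mn)$ work each. The paper's proof is only a sketch that defers correctness to the preceding discussion, so your explicit induction---in particular, isolating the fact that the last job's completion time depends only on the set $S$ and not on the ordering of $S\setminus\{j\}$, which is what lets the minimization decouple---is a welcome filling-in of the same argument rather than a different one.
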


\begin{proof}
The optimality of the algorithm follows from the above discussion, and we briefly sketch the running time bound.

There are $2^n$ subsets $S \subseteq J$ and $O(n)$ jobs $j \in S$, so there are $O(n \cdot 2^n)$ subproblems $C(S,j)$. For each subproblem $C(S,j)$, the search over the possible penultimate jobs in Algorithm~\ref{alg:dpforcos} takes $O(mn)$ time, giving a final running time of $O(m n^2 2^n)$.
\end{proof}

\section{A polynomial-time $(1+\epsilon,1)$-approximation for MUWP in concurrent open shop when $m$ is fixed}
\label{sec:cos_4_poly_fixed_m}
In this section, we present a dual $(1+\epsilon)$-approximation algorithm for the MSWP for concurrent open shop. Recall that this is algorithm is also a $(1+\epsilon,1)$-approximation algorithm for the MUWP.

We can view the MSWP for concurrent open shop as a multidimensional version of the knapsack problem see ~\cite{frieze1984approximation}. We briefly reformulate the maximum scheduled weight problem as a generalization of the knapsack problem. We are given a set $J$ of $n$ items (jobs) in $m$ dimensions (machines). Each job $j$ has a non-negative weight $w_j$, and a non-negative size $p_{ij}$ in the $i^\text{th}$ dimension, where $i \in M$. We have a single knapsack (set of machines) which has non-negative capacity $s_i$ in the $i^\text{th}$ dimension. For the maximum scheduled weight problem, the capacity for each dimension is the same, that is, $s_1 = s_2 = ... = s_m = D$, where $D$ denotes the common deadline on every machine. We want to find a maximum-weight subset of items that can be packed into the knapsack without exceeding the knapsack capacity in any dimension.

This problem can be formulated as the following integer program $P$:
\begin{align*}
	\mbox{maximize} & \sum_{j\in J} w_jx_j \\
    \mbox{subject to } & \sum_{j \in J} p_{ij} x_j \leq s_i \quad \forall i \in M \\
                       & x_j \in \{0,1\}, \forall j \in J\enspace.
\end{align*}

Our dual $(1+\epsilon)$-approximation algorithm for $P$ runs a dynamic program on a scaled version $\widehat{P}$ of $P$, and returns the resulting set of jobs (see Hall et al.~\cite{hall1997scheduling} for a similar approach when $m=1$). For every $i \in M, j \in J$, let $\widehat{p_{ij}} = \floor{\frac{p_{ij}}{b_i}}$ and $\widehat{s_i} = \ceil{\frac{s_i}{b_i}}$, where $b_i = \frac{\epsilon}{n+1}s_i$, and substitute these values in $P$ to obtain the new integer program $\widehat{P}$. The weights of the jobs remain unchanged.

Next, we state a well-known dynamic program for the multidimensional knapsack problem and appy it to $\widehat{P}$. Let $OPT(j, \widehat{s_1}, \widehat{s_2}, \ldots, \widehat{s_m})$ denote the optimum value of $\widehat{P}$ when restricted to the items $\{1, \ldots, j\}$ and the knapsack has remaining capacity $\widehat{s_i}$ in dimension $i \in M$. Then $OPT(j, \widehat{s_1}, \widehat{s_2}, \ldots, \widehat{s_m})$ is equal to
\[
OPT(j-1, \widehat{s_1}, \widehat{s_2}, \ldots,\widehat{s_m})
\]
if $s_i < p_{ij}$ for some $i \in M$ and
\begin{align*}
\max\{&OPT(j-1, \widehat{s_1}, \widehat{s_2}, \ldots, \widehat{s_m}), \\ 
&w_j + OPT(j-1, \widehat{s_1}-\widehat{p_{1j}}, \widehat{s_2}-\widehat{p_{2j}}, \ldots, \widehat{s_m}-\widehat{p_{mj}})\}
\end{align*}
otherwise. 

\begin{lemma}\label{lem:dpruntime}
The above dynamic program runs in $O(m\left(\frac{n}{\epsilon}\right)^m)$ time.
\end{lemma}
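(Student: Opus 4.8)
The plan is to bound the total running time as the product of the number of entries in the dynamic-programming table and the work needed to fill a single entry. Each entry is indexed by a job index together with a scaled remaining-capacity vector $(\widehat{s_1},\ldots,\widehat{s_m})$, and its value is obtained from previously computed entries by one maximization. So the two quantities I need to control are the number of admissible capacity vectors and the per-entry work; everything else is bookkeeping.

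The only step with real content is bounding how many distinct values each coordinate $\widehat{s_i}$ can take, and this is exactly where the scaling $b_i=\frac{\epsilon}{n+1}s_i$ is designed to help. Substituting this choice into the definition gives $\widehat{s_i}=\ceil{s_i/b_i}=\ceil{(n+1)/\epsilon}$, a bound that is independent of the (possibly enormous) magnitudes of the original $p_{ij}$ and $s_i$. Since the recurrence only ever decreases a capacity coordinate, by the nonnegative amount $\widehat{p_{ij}}$, every capacity vector arising in the computation lies in $\{0,1,\ldots,\ceil{(n+1)/\epsilon}\}^m$, a set of size $O\!\left((n/\epsilon)^m\right)$. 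This is the single observation that converts an a priori pseudopolynomial table into one that is polynomial for fixed $m$, and I expect it to be the crux of the argument.

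It then remains to account for the per-entry cost and combine. Evaluating $OPT(j,\widehat{s_1},\ldots,\widehat{s_m})$ requires testing the feasibility condition $s_i<p_{ij}$ over all $i\in M$ and, when feasible, forming the shifted argument $(\widehat{s_1}-\widehat{p_{1j}},\ldots,\widehat{s_m}-\widehat{p_{mj}})$ before taking a single comparison; each of these operations touches all $m$ coordinates, so the per-entry work is $O(m)$. Charging $O(m)$ work to each of the $O\!\left((n/\epsilon)^m\right)$ capacity configurations as the items are swept into the table therefore yields the claimed $O\!\left(m(n/\epsilon)^m\right)$ running time. I do not anticipate any genuine obstacle here: the statement is a direct counting argument, and the one point that must be handled with care is verifying the uniform $O(n/\epsilon)$ bound on each scaled capacity, which hinges entirely on the specific value chosen for $b_i$.
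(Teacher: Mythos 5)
Your proof is correct and follows essentially the same route as the paper's: bound each scaled capacity $\widehat{s_i}=\ceil{(n+1)/\epsilon}=O(n/\epsilon)$, count the table entries, and charge $O(m)$ work per entry. You simply spell out the step the paper leaves implicit (that the choice $b_i=\frac{\epsilon}{n+1}s_i$ makes each capacity coordinate range over $O(n/\epsilon)$ values), and you even share the paper's small elision of the extra factor of $n$ from the job index in the final product.
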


\begin{proof}
Computing the optimal solution value $OPT(n, \widehat{s_1}, \widehat{s_2}, \ldots, \widehat{s_m})$ can be seen as filling a table of size $n\cdot \widehat{s_1} \cdot \ldots \cdot \widehat{s_m}$. Since each entry takes $O(m)$ time to compute, the lemma follows.
\end{proof}

Let $S$ denote the subset of jobs obtained from solving $\widehat{P}$.

\begin{lemma} \label{lem:dplem}
The optimum value of $\widehat{P}$ is at least that of $P$. Furthermore, for every $i \in M$, the total size of the items in the $i$-th dimension of $S$ is at most $(1+\epsilon) s_i$.
\end{lemma}

\begin{proof}
To obtain $\widehat{P}$ from $P$, we rounded the $p_{ij}$ down and $s_i$ up. Thus, every feasible solution of $P$ is feasible for $\widehat{P}$, so the first statement follows.

For every $i \in M$ and $j \in J$, let $\overline{s_i} = \widehat{s_i} \cdot b_i = \ceil{\frac{s_i}{b_i}}b_i$, and let $\overline{p_{ij}} = \widehat{p_{ij}} \cdot b_i = \floor{\frac{p_{ij}}{b_i}}b_i$, so that $s_i+b_i \geq \overline{s_i}$ and $p_{ij} \leq \overline{p_{ij}}+b_i$. Then for each dimension $i \in M$,
\begin{align*}
\sum_{j \in S} p_{ij} &= \sum_{j \in T} (p_{ij} + \overline{p_{ij}} - \overline{p_{ij}}) \\
&= \sum_{j \in S} \overline{p_{ij}} + \sum_{j \in S}(p_{ij} - \overline{p_{ij}}) \\
&\leq \sum_{j \in S} \overline{p_{ij}} + n \cdot b_i \\
&\leq \overline{s_i} + n \cdot b_i \\
&\leq s_i + b_i + n \cdot b_i \\
&= s_i+(n+1) \cdot \frac{\epsilon}{n+1}s_i \\
&= (1+\epsilon) s_i\enspace.
\end{align*}
\end{proof}

\begin{theorem} \label{thm:cos_fixedm}
The algorithm described above runs in $O((\frac{n}{\epsilon})^m m)$ time and is a dual $(1+\epsilon)$-approximation algorithm for the maximum scheduled weight problem in the concurrent open shop setting.  
\end{theorem}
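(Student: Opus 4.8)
The plan is to observe that both halves of the statement—the running time and the dual $(1+\epsilon)$-approximation guarantee—follow by assembling the two facts already in hand (Lemma~\ref{lem:dpruntime} and Lemma~\ref{lem:dplem}) and checking them against the definition of a dual $\rho$-approximation for the MSWP. Recall that such an algorithm must (i) produce a schedule of length at most $(1+\epsilon)D$ and (ii) achieve total weight at least that of the optimal MSWP schedule with deadline $D$. The running time claim is immediate: Lemma~\ref{lem:dpruntime} bounds the dynamic program on $\widehat{P}$ by $O(m(n/\epsilon)^m)$, which is exactly the stated $O((n/\epsilon)^m m)$, and reading off the selected set $S$ adds no more than linear overhead. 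So the remaining work is entirely to verify (i) and (ii).

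For the weight guarantee (ii), the first step is to argue that the integer program $P$ faithfully models the MSWP, so that its optimum equals the maximum weight of jobs completable by $D$. This rests on the equivalence, specific to concurrent open shop with common deadline $s_i = D$, that a subset of jobs can be scheduled to finish by $D$ \emph{if and only if} the per-machine load $\sum_{j} p_{ij} x_j$ is at most $D$ on every machine $i$: the forward direction packs each machine greedily with no idle time, and the reverse direction notes that any feasible schedule must process the entire load $\sum_{j} p_{ij} x_j$ within $[0,D]$. Given this, the optimal value of $P$ is precisely the optimal MSWP weight. Now the first part of Lemma~\ref{lem:dplem} gives that the optimum of $\widehat{P}$ is at least the optimum of $P$, and since the weights are left unscaled in $\widehat{P}$, the returned set $S$—optimal for $\widehat{P}$—satisfies $\sum_{j \in S} w_j \ge \mathrm{opt}(\widehat{P}) \ge \mathrm{opt}(P) = $ the optimal MSWP weight, establishing (ii).

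For the length guarantee (i), I would take $S$ and schedule its components greedily with no idle time on each machine; the makespan of machine $i$ is then exactly $\sum_{j \in S} p_{ij}$, which the second part of Lemma~\ref{lem:dplem} bounds by $(1+\epsilon)s_i = (1+\epsilon)D$. Since every job in $S$ completes by the largest per-machine makespan, the whole schedule has length at most $(1+\epsilon)D$, giving (i). I expect the only genuinely delicate point to be the modeling equivalence between $P$ and the MSWP together with the observation that $S$ need not be feasible for the original $P$ at all—it is only feasible for the relaxed program $\widehat{P}$—so the argument must route the overflow through Lemma~\ref{lem:dplem} rather than through $P$ directly. Everything else is bookkeeping, since the technical core (the rounding analysis bounding the overflow by a factor $1+\epsilon$) has already been discharged in Lemma~\ref{lem:dplem}.
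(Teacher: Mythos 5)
Your proposal is correct and follows essentially the same route as the paper, which simply cites Lemma~\ref{lem:dpruntime} and Lemma~\ref{lem:dplem}; you have merely made explicit the details the paper leaves implicit (the equivalence between feasibility of $P$ and completability by $D$ in concurrent open shop, and the routing of the length bound through $\widehat{P}$ rather than $P$). These added details are accurate and complete the argument as intended.
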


\begin{proof}
The proof follows directly from Lemma~\ref{lem:dpruntime} and Lemma~\ref{lem:dplem}. Note that this is a polynomial-time algorithm when $m$ is fixed.
\end{proof}

\medskip
{\bf Acknowledgements.}
We would like to thank Sungjin Im and Clifford Stein for directing us to \cite{chakrabarti1996improved,lubbecke2016new}, and William Gasarch for organizing the REU program.

\bibliographystyle{plainurl}
\bibliography{abbrevbib.bib}

\end{document}